\pgfplotsset{compat=1.18}
\newsavebox{\measure@tikzpicture}
	\def\tikz@width{#1}%
	\def\tikzscale{1}\begin{lrbox}{\measure@tikzpicture}%
	\edef\tikzscale{\pgfmathresult}%
\newcommand{\kaoutar}[1]{{\color{magenta}{\bf kaoutar:} #1}}
\theoremstyle{definition}
\newtheorem{definition2}{Definition}
\theoremstyle{definition}
\newtheorem{property}{Property}
\newcommand{\rememberlines}{\xdef\rememberedlines{\number\value{AlgoLine}}}
\newcommand{\resumenumbering}{\setcounter{AlgoLine}{\rememberedlines}}
\title{Arma: Byzantine Fault Tolerant Consensus with Horizontal Scalability}
\author{Yacov Manevich}{IBM Research, Zurich}{yacov.manevich@ibm.com}{}{}
\authorrunning{ }
\author{Hagar Meir}{IBM Research, Haifa}{hagar.meir@ibm.com}{}{}
\author{Kaoutar Elkhiyaoui}{IBM Research, Zurich}{kao@zurich.ibm.com}{}{}
\author{Yoav Tock}{IBM Research, Haifa}{tock@il.ibm.com}{}{}
\author{May Buzaglo}{IBM Research, Haifa}{may.buzaglo@ibm.com}{}{}
\keywords{Consensus, Distributed Systems}
\begin{document}
	
\maketitle

\begin{abstract}

Arma\footnote{Means ``Chariot'' in Greek. Similarly to a chariot that can tow
	more when horses are added, Arma totally orders more client transactions
	the more hardware is added.} is a Byzantine Fault Tolerant (BFT) consensus system designed to
achieve horizontal scalability across all hardware resources: network
bandwidth, CPU, and disk I/O. As opposed to preceding
BFT protocols, Arma separates the dissemination and validation of
client transactions from the consensus process, restricting the latter to totally ordering only metadata of batches of transactions. This separation enables
each party to distribute compute and storage resources for transaction validation, dissemination and disk I/O among multiple machines, resulting in horizontal scalability.
Additionally, Arma ensures censorship resistance by imposing a maximum
time limit on the inclusion of client transactions. We built and evaluated two Arma prototypes. The first is an independent system handling over 200,000 transactions per second, the second integrated into Hyperledger Fabric, speeding its consensus by an order of magnitude.
\end{abstract}

\section{Introduction}
\label{sec:intro}


Byzantine Fault Tolerant (BFT) consensus has gained attention with the advent of Distributed Ledger Technologies and the extended use of the latter for financial asset exchange applications, i.e. scenarios that desire the high degree of resilience that BFT consensus offers. 
However, early works on BFT protocols suffered from poor performance, especially with a high number of participants. 
In recent years, academic and industrial research successfully demonstrated that BFT systems are capable of achieving significant throughput at large scale.


A prevalent characteristic among consensus protocols such as PBFT \cite{PBFT} or Paxos \cite{Lamport2001PaxosMS} is their monolithic architecture, where each participating party consolidates all the subroutines of the protocol within a single node.
The reason for such an approach is twofold: First, implementing and analyzing a distributed consensus protocol involving multiple parties is inherently challenging, and having each party manage multiple nodes further exacerbates the complexity. 
Second, many researchers gauge protocol efficiency based on the overall number of messages sent during the protocol or by each party, often overlooking the significant discrepancy in message sizes and their transmission efficiency.
Indeed, the network bandwidth a protocol requires from each node is often the limiting factor for the throughput and not the total number of messages sent in the protocol.

While the monolithic architecture is relatively straightforward to implement and analyze, it inherently constraints the performance of the system. 
Eventually, the machine of each party will reach its limit in terms of CPU and storage operations per second.
Only recently, researchers have proposed a consensus protocol \cite{Narwhale} that allows each party to distribute its execution across multiple machines, where not all nodes are created equal. 
This design enables the horizontal scaling of all resources required for consensus, including CPU, storage I/O rate, and network bandwidth. 
In such systems, each party operates several nodes, and the various subroutines of the consensus protocol are executed on nodes based on their specific roles. 
However, current implementations of such systems do not enable censorship resistance nor transaction de-duplication.

\subsection{The Contribution of This Work}
This paper introduces Arma, a new consensus protocol that builds on ideas from the recent work of Danezis et al.~\cite{Narwhale}, to achieve horizontal scalability. 
Distinguishing itself from the previous work, Arma incorporates censorship resistance and transaction de-duplication mechanisms, where in~\cite{Narwhale} these properties are not inherent like in Arma, and can only be achieved with having the transaction totally ordered several times, incurring a severe impact on performance. 
%
%
Once a client submits a transaction to Arma, it can be assured of its finalization.
As a result, Arma not only demonstrates high throughput but also enhances the user experience, by eliminating the need for client applications to include logic for tracking transaction finalization. 
%
%
This advancement opens up possibilities for constructing payment systems where clients can effortlessly initiate transactions without the necessity of monitoring their execution.

Additionally, we highlight a new technique of denial of service protection that utilizes probabilistic verification of transaction batches which allows Arma to be resistant to denial of service at a performance overhead considerably lower than the state of the art. 

\subsection{Brief Overview of the Arma Protocol}

In the Arma consensus protocol, transaction dissemination and validation are entrusted to separate groups of nodes known as shards. 
Each shard includes a representative from every party (participant or member who runs the Arma protocol) and exclusively manages a specific subset of transactions.
The nodes within these shards provide attestations of persisted batches of transactions or votes related to node misbehavior. 
These attestations and votes are then submitted into a BFT consensus protocol. By totally ordering these attestations and votes, Arma nodes gain collective knowledge regarding which batches of transactions have been safely persisted by a sufficient number of nodes and identifies nodes displaying faulty behavior, such as deliberate transaction censorship or a simple crash.
With the attestations and votes fully ordered through the BFT consensus protocol, an ordering of batches from different shards is achieved, subsequently leading to an implicit total order between all transactions.

\subsection{Paper Outline}

Firstly, Section~\ref{sec:related} surveys the existing work related to this paper and provides a comparison.
Then, Section~\ref{sec:design} presents an overview of the design of Arma, highlighting its key components and mechanisms, while Section~\ref{sec:arch} gives an in depth description of the Arma architecture. 
Following that, Section~\ref{sec:integration} explains how to integrate Arma into Hyperledger Fabric~\cite{HLF}, and discusses performance evaluation in Section \ref{sec:eval}.
In Section~\ref{sec:proof} we give a formal analysis of Arma's properties, and in Section~\ref{sec:future} we conclude and discuss future work.

\section{Related Work}
\label{sec:related}
In this section, we provide an overview of existing BFT protocols. 
We distinguish between earlier BFT architectures and the one Arma is based on.

In contrast to permissionless blockchains like Bitcoin~\cite{bitcoin} and Ethereum~\cite{ethereum}, systems designed for enterprise use cases often possess a well-defined membership structure and impose stringent performance requirements. 
Among the various BFT protocols, PBFT \cite{PBFT} stands out as a prominent choice for such settings, serving as the foundation upon which subsequent works build.
A key characteristic shared by these protocols is the establishment of a total order by having a designated leader node broadcast a batch of client transactions. 
However, this inherent asymmetry among the nodes limits the protocol's throughput, as it becomes constrained by the network bandwidth of the leader node. 
Indeed, any protocol that relies on a sole leader to propose batches and coordinate the consensus rounds, imposes a high load on the leader, which unavoidably becomes a bottleneck.
A significant approach to mitigate the aforementioned bottleneck is to have multiple leaders (e.g.,~\cite{Red-Belly, Honey, Bounded-Delay, Mir-BFT}), many nodes who concurrently and independently propose batches.

A notable example is Mir-BFT~\cite{Mir-BFT}, which uses multiple leaders while employing a transaction de-duplication technique.
Mir-BFT runs multiple parallel instances of PBFT~\cite{PBFT}, where each batch sequence $i$ is specifically reserved for the PBFT instance $i\,\%\,k$, where $k$ is the number of parallel PBFT instances. 
By running multiple leaders, Mir-BFT effectively distributes
the network load that would typically be concentrated on a single
leader. 
This distribution helps alleviate the network bandwidth bottleneck associated with having a single leader.
Furthermore, in Mir-BFT every client transaction is assigned to a specific PBFT instance by hashing it to a value within the range of $\left\{ 1,...,k\right\}$.
The deterministic hashing of transactions to PBFT instances
ensures that duplicate transactions are prevented. This feature is
crucial not only for preventing denial-of-service attacks but also
for avoiding unnecessary duplication of total order for the same
transaction.


%

However, running BFT instances in parallel can indeed increase throughput, but it also introduces a higher vulnerability to node crashes. 
Let's consider a protocol like Mir-BFT, where there are $k=\frac{n}{2}$ parallel instances of BFT, where $n$ is the total number of nodes.
In this scenario, batch $b_{i}$ is assembled and broadcast by the leader of instance $i\,\%\,k$ , and similarly, batch $b_{i+1}$ is broadcast by the leader of instance $(i+1)\,\%\,k$. 
If the leader of instance $i\,\%\,k$ crashes before broadcasting batch $b_{i}$, batch $b_{i+1}$ cannot be delivered until a new leader is established in instance $i\,\%\,k$. 
As a result, in such a protocol with k parallel PBFT instances, the throughput drops to zero when any of the leaders among the $k$ different instances faults.
If the probability of a crash is the same across all nodes and they are independent, then the probability of a failure is higher in Mir-BFT like protocols than PBFT (with a single leader).
In a standard PBFT protocol, the throughput drops to zero only when the leader node crashes.
The increased susceptibility to crashes in Mir-BFT's approach can have a significant impact on the system's up-time availability.

Moreover, nodes participating in the consensus process not only  transmit batches over the network but also perform other essential tasks. 
These tasks include writing the batches to disk for crash fault tolerance, parsing
the client transactions within the batches, and verifying their integrity.
The speed at which batches are written to disk is influenced by the
underlying storage system of the node. On the other hand, transaction
verification is a computationally intensive task that relies on the
CPU's processing power. Both disk writing and transaction verification
play crucial roles in the overall performance of the consensus process.
Consequently, a protocol that is efficient at load balancing network bandwidth
across all parties may have its limiting factor shift towards CPU and disk I/O.

\subsection{Distributed Multi-Instance BFT Architecture}

Running PBFT in parallel enhances horizontal scalability, but it also introduces certain limitations. 
In this setup, each node participates in multiple PBFT protocols, resulting in the need to verify all transactions and write all batches to its underlying storage. 
Consequently, the scalability of Mir-BFT is primarily limited to the vertical scaling of CPU and storage I/O. 
Simply adding more machines to the system cannot effectively address potential CPU or storage I/O bottlenecks that may arise.


Danezis et al.~\cite{Narwhale} offer insights into spreading the load of storage I/O and CPU across multiple machines within the same party, thereby overcoming these limitations. 
The key concept is to separate and distribute the tasks of transaction dissemination and validation across multiple machines and subsequently establish a total order
for the corresponding metadata. 
By effectively separating these responsibilities and achieving a distributed consensus on the metadata, the aforementioned work \cite{Narwhale} enables the scalability of storage I/O and CPU by leveraging the collective resources of multiple machines within a party.

Figure \ref{fig:multi} depicts the architectural difference between multi-instance BFT protocols such as Mir-BFT and distributed multi-instance BFT like Arma and \cite{Narwhale}. 

\begin{figure}[htp]
	\centering
	\subfloat[\centering Each party (a,b,c,d) runs a single machine with co-located BFT instances]{{\includegraphics[width=5cm]{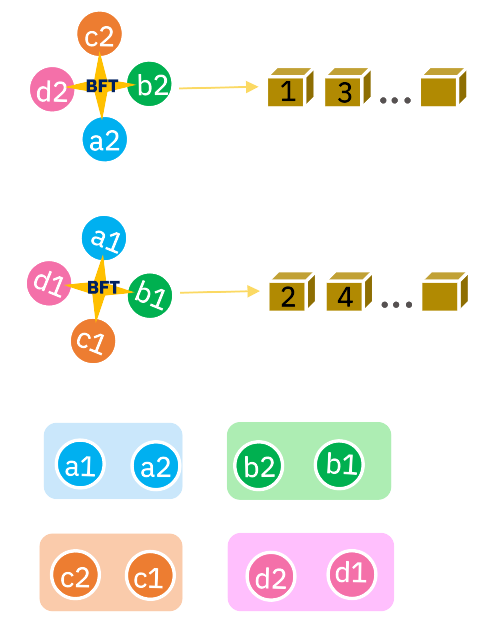} }}%
	\qquad
	\subfloat[\centering Each party (a,b,c,d) runs several machines, each executing a dedicated protocol instance]{{\includegraphics[width=7cm]{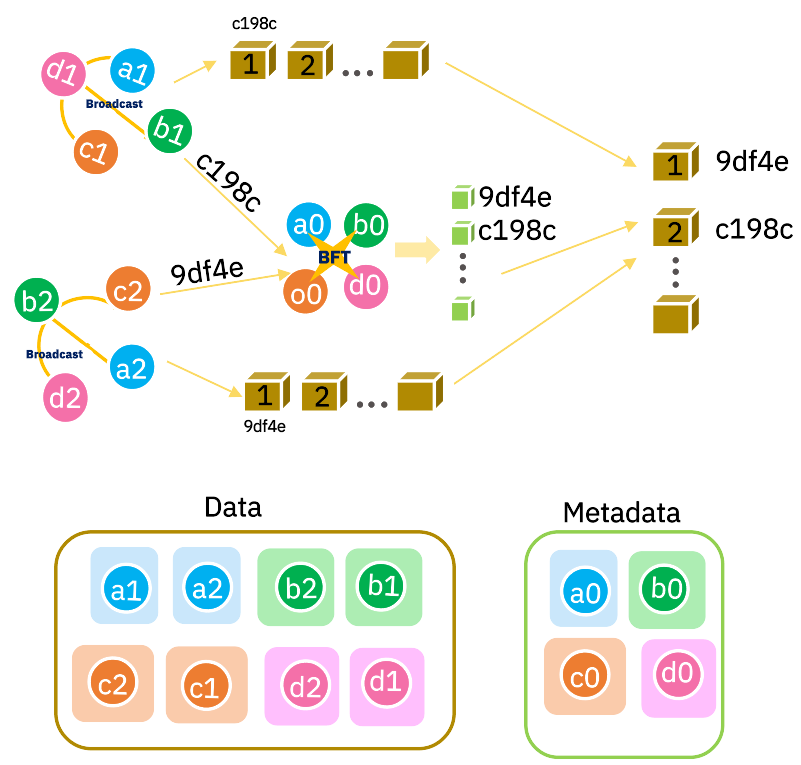} }}%
	\caption{(a) Multi-Instance BFT  vs (b) Distributed Multi-Instance BFT }%
	\label{fig:multi}
\end{figure}

In distributed multi-instance BFT the actual dissemination of transactions is not conducted through a full consensus protocol but rather through a sub-protocol known as reliable broadcast. 
In \cite{Narwhale}, once a batch is persisted by enough nodes, a certificate of availability is formed, including signatures on the metadata of the batch.
The certificate of availability is then subjected to a real consensus protocol to establish total order. 
The total order of transactions is derived from the order of the metadata of the batches.

Similar to Mir-BFT or other multi-instance BFT protocols, there can be multiple instances of the sub-protocol responsible for transaction dissemination. 
However, each instance can execute on its own machine, and instances operated by the same party are not co-located on the same machine. 
This distributed execution allows for scalability and better resource utilization across multiple machines, contributing to the overall efficiency and performance of the consensus protocol.
Furthermore, unlike the multi-instance BFT approach, a distributed multi-instance BFT design is less vulnerable to a crash of a party. 
In the event of a node responsible for broadcasting batches experiencing a crash, other instances can still completely order certificates of availability.
This resilience is attributed to the fact that the order among instances transmitting transactions is determined by the consensus protocol itself, rather than relying on a predetermined sequence allocation as seen in multi-instance BFT systems like Mir-BFT. 
This feature enhances the fault tolerance and availability of the system, as it allows for continuous progress even in the presence of node failures.

\subsubsection{Distributed Multi-Instance BFT Shortcomings}


In the work of 'Narwhal and Tusk' \cite{Narwhale}, once a batch creator gets enough distinct signed acknowledgments for the batch, it combines them into a certificate of batch availability and then sends it to the rest of the nodes. 
However, this design is susceptible to the batch creator maliciously withholding the certificate of batch availability, hence delaying the transactions in the batch from being finalized. 
The authors point that such a problem is resolved by having the clients monitor their transaction inclusion and if a timeout has been reached, re-submitting it to a different shard. 

In contrast to the aforementioned design, in Arma, each batcher node be it the batch creator (denoted as primary) or a secondary is responsible for having its signed acknowledgment to be totally ordered via consensus. 
Therefore, Arma is not susceptible to such an attack as it totally orders the signed acknowledgments in a decentralized manner.

Moreover, the approach  of \cite{Narwhale} does not  incorporate censorship
resistance or transaction de-duplication mechanisms. A malicious node
responsible for broadcasting batches can selectively ignore transactions
or intentionally slow down its execution, potentially leading to delays.
In such cases, the system relies on clients to resubmit their transactions
if they are not finalized within a designated timeout period. Furthermore,
malicious clients can submit their transactions to multiple instances,
resulting in wasteful resource utilization and potential duplication
of finalized transactions.

In contrast, Arma builds upon the fundamental idea of \cite{Narwhale} and thus benefits from its inherent performance advantages, but also incorporates
mechanisms to achieve censorship resistance (as defined in \cite{Honey}) and transaction de-duplication.
By following the Arma protocol, clients can be certain that their
transactions will always be ordered, eliminating the need for timeout-based
resubmission. 

\section{Design Overview}
\label{sec:design}

The Arma protocol is broken down into four stages: \textbf{Validation}, \textbf{Batching}, \textbf{Consensus}, and \textbf{Block Assembly}.
In the first stage, transactions are validated against some pre-defined static system rules (e.g., check transaction format, and client signature);
once the validation stage completes, transactions are dispatched to nodes responsible for batching, named \emph{batchers}. 
During batching, transactions are bundled into batches and persisted to disk.
The batchers then create \emph{batch attestation shares (BASs)}, which are batch digests and signatures over them, and submit the BASs to the \emph{consensus nodes}. 
The latter totally order the BASs and create signed block headers that are delivered to the block assembly nodes, which are called \emph{assemblers}. 
After receiving the block headers from the consensus the assemblers fetch the corresponding transaction batches from the batchers and construct the full block.

It is important to note that in Arma's validation stage which takes place at the routers, only a local transaction validation is performed to ascertain their origin and prevent denial of service attacks.  However, the global validity of the transactions is not established by Arma, as Arma does not execute nor validate any transactions in the batches it totally orders in parallel. 
Therefore, any system that wishes to incorporate Arma must retrieve the blocks from the assemblers and then execute and validate the transactions, and potentially,  some transactions would be marked as invalid, and skipped over.

%
%
%

\subsection{Terminology and Definitions}

We use the following terms to describe the Arma protocol:

\begin{definition2}[\textbf{Nodes}]
Each node has a dedicated role (router, batcher, consensus node or assembler), and it can be either a physical or a virtual machine, or a  process that shares the machine with other nodes but operates independently. 
\end{definition2}

\begin{definition2}[\textbf{Parties}]
The Arma consensus participants. Each node in the system is associated with a single party, that deploys and maintains the node, and each  
party comprises router, batcher, consenus and assembler nodes. 
We denote the total number of parties in the system as $N$. 
\end{definition2}

\begin{definition2}[\textbf{Byzantine node/party}]
A node or party that is not necessarily abiding by the protocol, but exhibits arbitrary behavior. Byzantine \cite{ByzGen} nodes/parties not only crash but they may deviate from the specified protocol. Byzantine nodes/parties typically represent entities that have been compromised. 

In what follows we assume that there is at most $F$ out of $N$ Byzantine parties and that $F < \frac{N}{3}$. 
We consider a node to be Byzantine if the party with which it associates is also Byzantine and vice versa.
\end{definition2}

\begin{definition2}[\textbf{Quorum}]
Defined by the smallest subset of distinct parties which is guaranteed to intersect with another quorum in at least one correct party.
For instance, if $N=3F+1$, a quorum is exactly $2F+1$ parties, as
$\left(2F+1\right)\cdot2-N=F+1$ which include at least one correct
party as up to $F$ are malicious.
\end{definition2}

\begin{definition2}[\textbf{Shard}]
In Arma, a shard represents a logical partition of the transaction space, dividing it into sets of comparable sizes. 
For instance, one possible approach is to assign the least significant bit of a transaction to Shard 1 if it is zero, and to Shard 2 if it is one. 
Sharding enables parallelization of transaction processing, and hence increases scalability.
\end{definition2}

\begin{definition2}[\textbf{Batch}]
A set of transactions from a specific shard bundled together to be sent over the network or persisted to disk.
\end{definition2}

\begin{definition2}[\textbf{Batch Attestation Share (BAS)}]
A message signed by a party attesting that a specific batch (from a specific shard) has been persisted to disk by that party. 
\end{definition2}

\begin{figure}[H]
		\centering
		\includegraphics[width=9cm]{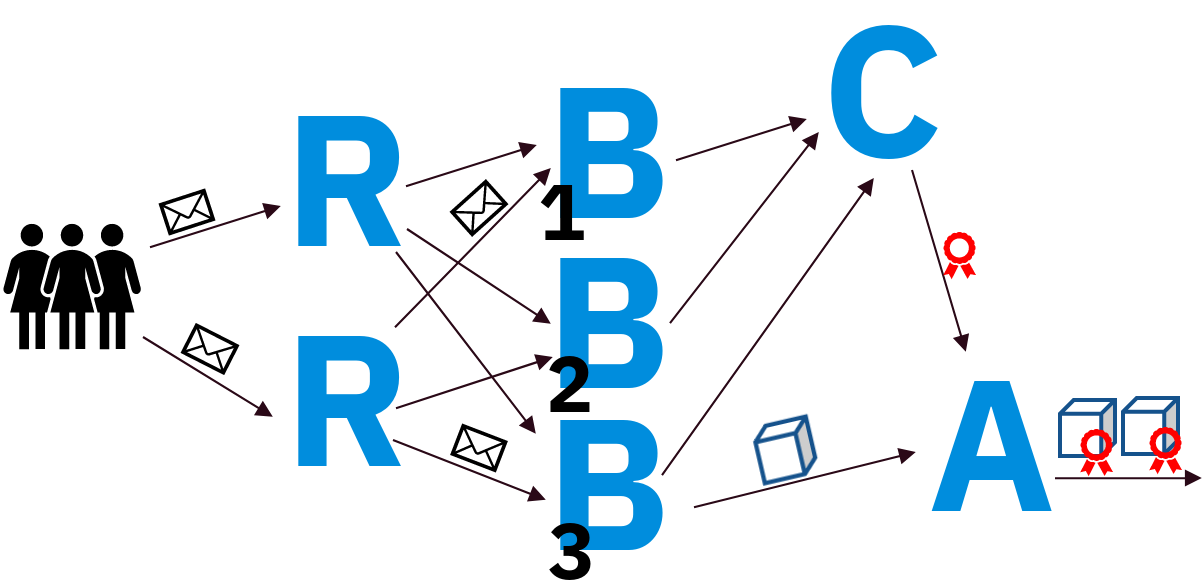}
		\caption{Information flow among components run by a single party.
			Each party runs several (R)outer nodes, (B)atcher nodes (where each batcher node is mapped to a shard), a single (C)onsensus node and at least one (A)ssembler node.
			(R)outer nodes dispatches transactions submitted by clients to the shards via their (B)atcher nodes. 
			The (B)atcher nodes send votes on persisted batches to the (C)onsensus nodes for ordering. 
			The (A)ssembler node receives from the (C)onsensus node ordered signatures over the digests of batches persisted by the (B)atcher nodes. The (A)ssembler node then outputs the blocks which consists of the ordered batches and the signatures over their digests. 
		}%
		\label{fig:party}
\end{figure}

\subsection{Node Roles }
In the Arma system, each party operates one or multiple nodes of different types: \emph{Router, Batcher, Consensus}, and \emph{Assembler}. 
In particular, a party's infrastructure would typically include one or more routers, multiple batchers (one per shard), a single consensus node, and at least one assembler.
Figure \ref{fig:party} depicts the components of a single party in an Arma system with three shards. 
The Router nodes check transaction validity and forward the received and validated transactions to the Batcher of the  designated shard.
Batchers group the received transactions into batches and submit BASs to the Consensus nodes after persisting the batches locally. 
Assemblers receive an ordered list of signed block headers from the Consensus nodes, and retrieve the corresponding batches from Batchers. 

More specifically: 

\begin{enumerate}
	\item \textbf{Router}: Clients submit transactions to the routers of multiple parties.
	The routers operate in a stateless manner and verify that the transactions are well formed and properly signed by the authorized clients. 
	Each router then deterministically assigns validated transactions to shards and forwards them to the corresponding batcher node that belongs to the same party. 
	
	\item \textbf{Batcher}: A batcher receives validated transactions from the router(s) and store them in its \emph{memory pool}. 
	For each shard, there is exactly one batcher from each party.
	A batcher can be either a \emph{primary batcher} or a \emph{secondary batcher}:
	\begin{itemize}
		\item A primary batcher persists submitted transactions to disk and bundles them into a batch. 
		It subsequently forwards the formed batches to the secondary batchers. 
		Each shard has exactly one primary batcher, for a given term.
		\item A secondary batcher pulls batches from the primary batcher and persists them to disk. 
	\end{itemize}
	Upon persistence of a transaction by a primary, the transaction is removed from its memory pool. 
	Similarly, once a transaction enters a secondary
	batcher node, it remains in its memory pool until it receives a batch containing this transaction from the primary. 
	Eventually batchers would submit BASs to the consensus nodes to be totally ordered.
	\item \textbf{Consensus}: The consensus nodes collectively agree on the total order of transactions by ordering the batches of various shards.
	They do it by running a BFT consensus protocol which eventually assigns each batch a unique sequence number. 
	The consensus nodes output a series of batch headers, each signed by a quorum of consensus nodes.
	\item \textbf{Assembler}: An assembler compiles the total order of transactions submitted to Arma, by combining the output of consensus nodes, i.e., the total order of batch headers, with batches it retrieves from the batchers to form blocks.
	The assemblers persist the blocks to disk and can be seen as archivists, as the blocks contain both information from the consensus nodes and the batchers.
	
\end{enumerate}

%

\section{Detailed Architecture}
\label{sec:arch}

Now that the stages of the Arma protocol are explained, and the various roles of the nodes are defined, we take a deep dive into each of the four stages of the protocol.

\subsection{Routing and Validation}
In Arma, a correct client sends her transaction to all parties.
More specifically, the transaction is submitted to the routers of each party.
A router node is the simplest component among the four, operating in
a stateless manner with two roles: 
\begin{itemize}
	\item Mapping transactions to shards and forwarding them to the corresponding batcher nodes of one's party.
	\item Performing validity checks on the transaction and dropping it if it is found invalid. 
\end{itemize}

Therefore, a router node is a stateless component and of which multiple instances can be added as needed, making hence transaction validation and routing horizontally scalable.

\subsubsection{Mapping Transactions to Shards}

To ensure censorship resistance and de-duplication, it is essential that transactions are mapped to shards in a deterministic manner.
In this way, assuming a client follows the protocol, and submits the transaction to all parties, all correct parties' routers would  forward the received transactions to the appropriate batchers, ensuring that all correct batchers of the shard will receive the
transaction.
Additionally, for effective load balancing across the shards, the router must distribute the transactions among the known $k$ shards in a manner that is as close to a uniform distribution across the set $\left\{ 1,..,k\right\} $ as possible.

While a cryptographic hash function could serve as a suitable choice, it is worth noting that the router does not require the collision resistance and one-way properties. Instead, more efficient alternatives like a CRC checksum are enough.

\subsubsection{Transaction Verification}
Besides forwarding transactions to their corresponding shards, routers also ensure that transactions are well formed and properly signed by clients. 
In earlier protocols such as \cite{Mir-BFT}, each party ran a single node verifying all transactions arriving from clients to that party.
However, in Arma, a party can utilize the routers to horizontally scale the CPU intensive task of verifying signatures.
Identifying invalid transactions and excluding them from the next steps, ensure that system resources are devoted only to the processing of correctly formed transactions.

\subsection{Transaction Batching} 
\label{sec:batching}
In the batching stage transactions are delivered to batchers and bundled together into batches, in a way that ensures that (i) transaction batches are persisted safely into stable storage for redundancy and retrieval, and (ii) the corresponding batch attestation shares are submitted to the consensus stage. 
Recall that batchers are grouped into shards.
Each party runs a single batcher for every shard in the system and that batcher can be either a primary or a secondary. 
Each shard has a single designated primary batcher node, while the rest are secondaries. 

\subsubsection{Transaction Dissemination}
As mentioned before, a transaction is submitted by a correct client to all the parties in the system (i.e., their router nodes). 
The correct client considers the submission of the transaction successful only if it has been successfully delivered to $N-F$ parties. 
The routers of the correct parties subsequently forward the transaction to its assigned shard via the shard's corresponding batcher for each party. 

If the transaction is delivered to the primary batcher of the shard, then the primary batcher includes it in a batch and persists the batch to disk.
When the transaction arrives to a secondary batcher, the batcher puts it into its memory pool. 
The secondary batcher then pulls batches from the primary and after persisting them to disk, removes transactions that appear in the received batches from its memory pool (if they have already been added). 
Conversely, transactions that have been received from clients shortly after appearing in batches pulled from the primary do not get added to the memory pool, as they have already been processed.

\begin{figure}[h]
	\centering
	\subfloat[\centering A transaction sent from a client is sent to all router nodes and is then dispatched to the batcher node of each party]{{\includegraphics[width=5cm]{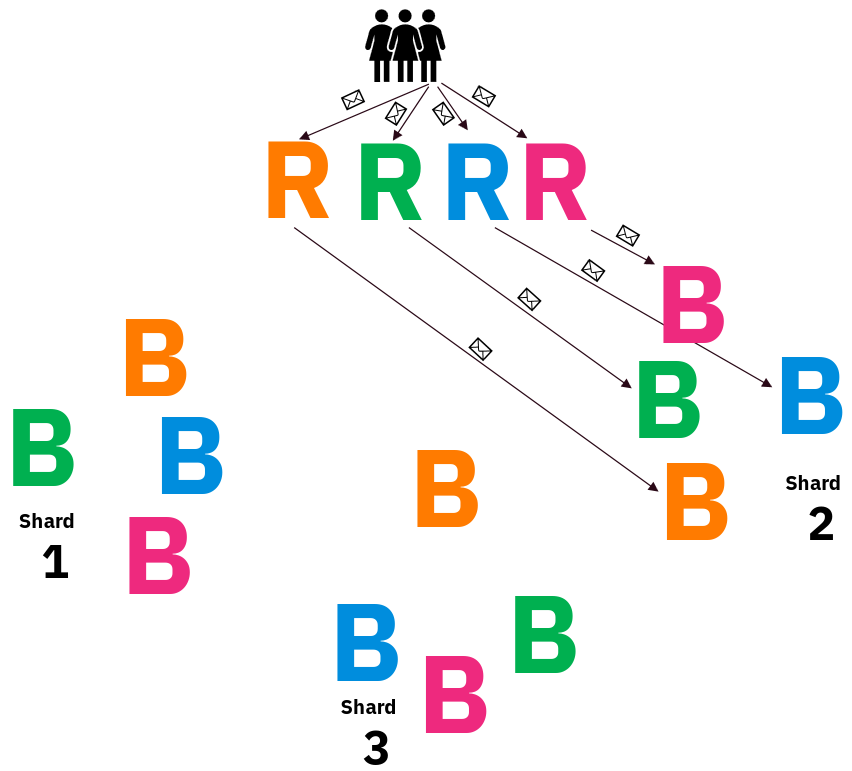} }}%
	\qquad
	\qquad
	\qquad
	\subfloat[\centering In each shard, the primary batcher node broadcasts a batch of transactions to the secondary batcher nodes]{{\includegraphics[width=5cm]{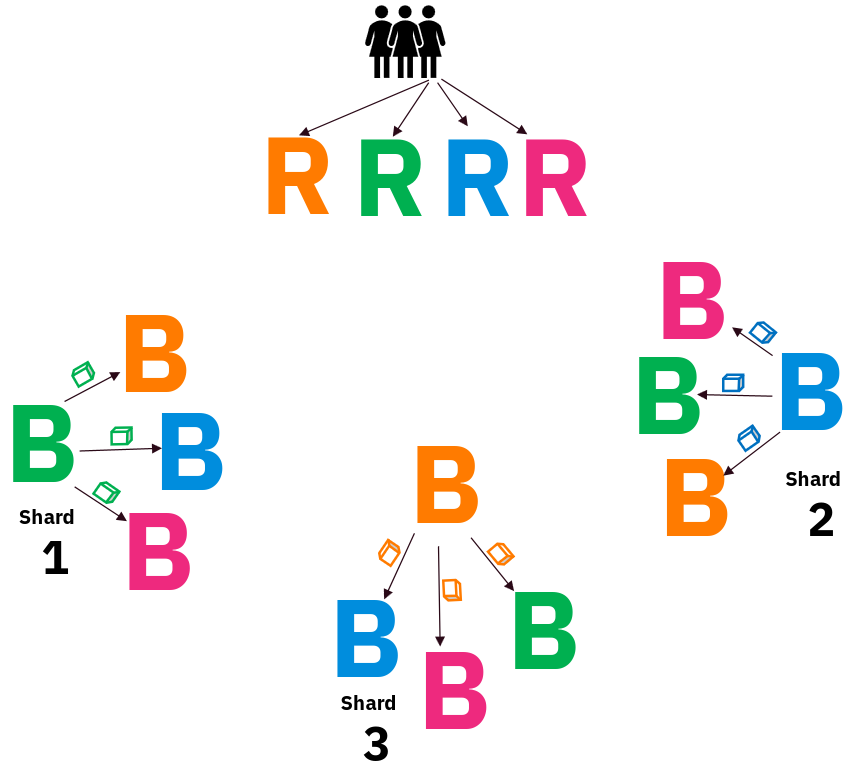} }}%
	\caption{A birds eye view of router and batcher nodes of all parties (a) routing a transaction to the appropriate shard, and (b)  broadcasting transaction batches}%
	\label{fig:routingbatching}
\end{figure}

Figure \ref{fig:routingbatching} shows the transaction flow from clients to batchers through the router nodes, and also how transaction batches are broadcast among the batchers in each shard.

Once a batch is persisted to disk, each batcher be it primary or secondary, creates a BAS by signing over a message $<sequence,digest,shard,primary>$ where $digest$ is a hash or a Merkle tree root of the transactions in the batch, and totally orders the BAS by sending it to all consensus nodes, acting as a client to the consensus nodes. 
It is the responsibility of each batcher to ensure that the BAS is sent to at least $N-F$ consensus nodes.

\begin{figure}[h]
	\fbox{
		\begin{minipage}[t]{0.5\linewidth}
			\vspace{0pt}  
					\begin{algorithm}[H] \label{alg:primbatcher}
						\KwIn{\\$~~~~~$Private key $sk$,\\
							$~~~~~$ledger $\mathcal{L}$,\\
							$~~~~~$Shard ID $S_{ID}$,\\
							$~~~~~$Total Order Broadcast $TO$,\\
							$~~~~~$Transaction memory pool $\mathcal{M}$}
						\While{I am primary}{
							b $\leftarrow \mathcal{M}.Get()$\;
							$\sigma$ $\leftarrow$ sign($sk$, b.Seq || b.Digest || $S_{ID}$ || $batcher_{ID}$) \;
							$\mathcal{L} \leftarrow \mathcal{L} ~ || ~ b$ \tcp*{Persist batch}	
							$TO.Broadcast( \big< \sigma, b.Seq, b.Digest,$\
							$S_{ID}, batcher_{ID} \big> )$\;	
						}
						\vspace{36mm}
						\caption{Primary batcher}
						\rememberlines
					\end{algorithm}
		\end{minipage}%
		\hfill\vline\hfill
		\begin{minipage}[t]{0.5\linewidth}
			\vspace{0pt}
					\begin{algorithm}[H] \label{alg:secbatcher}
						\resumenumbering
						\KwIn{\\$~~~~~$Private key $sk$,\\
							$~~~~~$ledger $\mathcal{L}$,\\
							$~~~~~$Shard ID $S_{ID}$,\\
							$~~~~~$Total Order Broadcast $TO$,\\
							$~~~~~$Transaction memory pool $\mathcal{M}$,\\
							$~~~~~$Stream of batches from primary $\mathcal{B}$}
						\While{I am not primary}{
							$seq$ $\leftarrow \mathcal{L}.Height()$\;
							$b$ $\leftarrow \mathcal{B}.RetrieveBatch(seq)$\;					
							\If{$invalidTxInBatch\left(b\right)$}{ 
								\tcp{Complain about primary}
								$\sigma$ $\leftarrow$ sign($sk$, $term_{t}$ || $S_{ID}$)\;	
								$TO.Broadcast$($\big< \sigma, term_{t}, S_{ID}\big> )$
								\Return{}	
							}
							$\mathcal{L} \leftarrow \mathcal{L} ~ || ~ b$		 \tcp*{Persist batch}		
							$\mathcal{M}.Remove(b.Requests)$\;
							$\sigma$ $\leftarrow$ sign($sk$, b.Seq || b.Digest || $S_{ID}$ || b.Primary)\;						
							$TO.Broadcast( \big< \sigma, b.Seq, b.Digest,$
							$S_{ID}, b.Primary \big> )$\;
						}
						\caption{$~~$Secondary batcher}
					\end{algorithm}
		\end{minipage}
	}
	\caption{Pseudo-code for batchers (primary and secondary). Once a batch is appended to the ledger (line 4), it will eventually be outputted by $\mathcal{B}$.}
	\label{fig:batchers}
	
	\vspace{-5mm}
	
\end{figure}

Figure \ref{fig:batchers} shows the pseudo-code for batcher nodes in both primary and secondary roles, where sending the BAS to consensus nodes is abstracted by a \emph{Total Order} primitive.

\subsubsection{Censorship Resistance in Byzantine Fault Tolerant Protocols}

A Byzantine Fault Tolerant (BFT) consensus protocol that ensures censorship resistance ensures that each transaction is eventually included in some finalized block. 
A common approach involves non leader (follower) nodes recording the time when a transaction is received. 
If a transaction is not included in a block sent by the leader within a specific time period, the follower nodes forward the transaction to the leader.
If the leader still does not include the transaction in subsequent blocks, a leader change protocol is initiated. 
Subsequently, the leader role is rotated, and one of the follower nodes becomes the new leader. 
Another technique of ensuring censorship resistance is a periodical leader rotation. 
By periodically switching the leader role, a transaction censored by a malicious leader node is eventually included in a block once a correct node becomes the leader.

\subsubsection{Censorship Resistance in Arma Batchers}
\label{sec:censorship-resistance}
Arma guarantees that if a client submits its transaction to all parties, it will eventually be included in a block without the need for the client to retry the submission.
In Arma, a similar technique to the first censorship resistance method is employed to ensure that batcher primary nodes do not censor transactions from clients. 
However, instead of initiating a leader change protocol among the batcher nodes, Arma utilizes a complaint voting mechanism.

In Arma, if a secondary batcher node receives a transaction that has not been included in a block sent by the primary batcher node for a period of time, it sends the transaction to the primary node itself. 
This is to ensure that the primary batcher received the transaction. 
Otherwise, a faulty client may have not sent the transaction to the primary batcher node and it will be falsely accused of censorship.
Only if the transaction is still not received in a batch from the primary batcher node after sending it directly, the secondary node suspects the primary node of censorship.
The secondary batcher node then sends a complaint vote to all consensus nodes. 
These complaint votes are totally ordered alongside the BASs.
Once a threshold of complaint votes is gathered against a specific primary batcher, the batcher nodes collectively designate the new primary batcher.

\subsubsection{Verifying Batches Sent from Primary Batcher Nodes}
As previously mentioned, part of the censorship resistance mechanism of batchers is having the secondary batcher nodes send transactions that were not included in a batch sent from the primary batcher.
This is done by the secondary batcher sending the transaction to the router node of the primary batcher's party. 
This guarantees that the secondary batcher nodes cannot send bogus transactions and compete with authentic ones over space in the memory pool of correct primary batchers.

What remains is preventing malicious primary batcher nodes from sending bogus transactions in their batches to the secondary nodes. While such transactions do not occupy space in the memory pool of the secondary nodes, they still should be avoided as they needlessly wastes resources and lower the effective throughput of the system when they are sent over the network and written to disk. 
The overwhelmingly common mitigation approach is to have the secondary batchers verify the transactions sent by the primary batcher. This approach indeed prevents the inclusion of bogus transactions, however, it is very resource intensive and does not scale horizontally as each secondary node must verify each transaction it receives from the primary batcher\footnote{Of course, a secondary batcher may distribute the verification to several machines and then aggregate the results, but this requires to send the batch and wait for results, which carries a latency overhead}. Arma instead employs a \emph{probabilistic approach}: Every secondary batcher node chooses and verifies a random subset of transactions in each batch received from the primary batcher node. 
More specifically, each correct secondary batcher picks $R$ transactions to verify among the $M$ transaction in a batch. $R$ is determined in so that if the batch presents more than $K$ invalid transaction, then a correct secondary batcher will detect this with high probability.


When a secondary node finds an invalid transaction, it issues a complaint to overthrow the primary and select a new one.

\subsubsection{Complaint Threshold for Rotating the Primary Batcher}
Effectively, the batcher nodes utilize the BFT consensus protocol
executed by the consensus nodes as a bulletin board that tracks, for
each shard, which party runs the primary batcher node for a given term. This approach
facilitates agreement on the misbehavior of faulty primary batcher nodes, and implicitly provides auditability on the misbehavior.

The threshold of complaint votes that induces a change in the primary batcher node is $F+1$ complaints from distinct nodes about a specific term. When $F+1$ complaint votes are collected, the term is incremented by 1 and a new batcher is then made primary. 
By defining the threshold to be $F+1$, we are assured that at least one correct party considers the primary batcher node of a shard to be faulty, and that no coalition of the $F$ faulty parties can overthrow a correct primary batcher node.

\subsection{Inducing Total Order Across Batches via BFT Consensus}
Once a batcher node persists a batch, it sends its BAS to all consensus nodes, who totally order the BASs by running a BFT consensus protocol. For simplicity, we say that the BFT protocol advances in rounds, and in each round it totally orders some BASs. The rounds do not refer to communication rounds or stages in the consensus protocol, but rather to the time elapsed between a leader sending a proposal to that proposal being committed by the consensus nodes.

\subsubsection{Collecting Batch Attestation Shares}
\label{sec:collecting-batch-attestation-shares}
Each batcher, whether primary or secondary, sends a BAS upon successfully persisting a batch to disk. 
The BASs  are totally ordered by consensus in some consensus round, and during the consensus protocol, the Arma consensus nodes determine which batch has collected enough BASs. 
Once $F+1$ distinct shares of the same BAS, i.e.,  $<sequence,digest,shard,primary>$,
are totally ordered, the associated batch can be successfully retrieved from at least one correct party. 
Therefore, the criteria for the inclusion of a batch in the total order is set to collecting $F+1$ BASs for that batch.

After $F+1$ BASs for a specific batch are totally ordered, the consensus nodes collaborate to assemble a quorum of signatures over a block header corresponding to that batch.
The signed block header contains a hash pointer to the previous block header extending thus the chain.
Each block header contains a digest of its corresponding batch, which is either a hash or a Merkle root of that batch. In this case, a batch corresponds to a block.
For efficiency purposes, a block can contain multiple batches and the header multiple batch digests, each corresponding to a batch in the block. 
This amortizes signature verification over several batches making the verification cheaper.


\begin{figure}[hpt]
	\fbox{
		\vspace{0pt}
		\begin{algorithm}[H]\label{fig:consensuss}
			\KwIn{\\$~~~~~$An ordered set of pending BASs
				from earlier rounds $P=\{bas_1, bas_2, ..., bas_m\}$,\\
				$~~~~~$A set of  BASs proposed by the  leader consensus node $B=\{bas_1, bas_2, ..., bas_k\}$}
			\KwOut{\\
				$~~~~~$An ordered set of pending BASs for future rounds $P$,\\
				$~~~~~$An ordered set of sets of distinct $F+1$  (threshold) BASs $T$
			}
			$P \leftarrow P \cup B$ \;
			$T \leftarrow \emptyset$ \; 
			$C \leftarrow \emptyset$ \; 
			{
			\tcp{Aggregate BASs according to their batches and count them}
				\ForEach{$bas\in P$}{
					$k \leftarrow < bas.Seq, bas.Shard, bas.Digest, bas.Primary \big>$ \; 
					$C[k] \leftarrow C[k]+1$ \;
					$T[k] \leftarrow T[k] \cup bas$ \;
				}			
				\ForEach{$k\in C$}{
					\If{$C[k] < F+1$}{ 
						$T \leftarrow T \setminus T[k]$ \tcp*{Find batches with at least a threshold of BASs}			
					}
				}
				
				$P \leftarrow P \setminus T$ \tcp*{Pending BASs are those without a threshold}			
				
				\KwRet{P, T}
			}	
			\caption{$~~$Pseudocode for processing batch attestation shares}
		\end{algorithm}
	}	
\end{figure}

\subsubsection{Collecting Thresholds of Batch Attestation Shares}
During the process of totally ordering BASs, a very likely scenario is one where $F+1$ BASs is totally ordered, but in different BFT rounds. 
To handle this situation, Arma maintains a list of pending BASs. 
Once $F+1$ BASs corresponding to a batch are collected, the consensus nodes  can collaborate to create a block header corresponding to that batch, sign over it and include it in the hash chain.
In Algorithm~\ref{fig:consensuss} we explain how thresholds of $F+1$ BASs for the same batch are grouped together and extracted even though they arrive in different consensus rounds. 
The BASs agreed upon in a round of BFT are input of Algorithm~\ref{fig:consensuss}, which collects thresholds of BASs from both the current consensus round and previous rounds. The leftover BASs for which a threshold was not found are then passed as input to the next consensus round.

This approach ensures that all consensus nodes will process BASs deterministically and sign over the same block headers. 
However, there is a non trivial challenge in this design: After $F+1$ distinct BASs are accumulated in the pending list, they are purged by Algorithm \ref{fig:consensuss}, but additional BASs (of the same batch) may then accumulate in future rounds. 
This is because more than $F+1$ batcher nodes may submit their BASs, but there is no guarantee that they end up being proposed in the same BFT round. 
For example, in a $4$ node setting with a single faulty node, the threshold is $F+1=2$, however, $3$ BASs are created. 
We denote a BAS that is totally ordered by consensus in a later round after collecting $F+1$ BASs, as an \emph{orphaned} BAS. 
Clearly, in the previous scenario, the orphaned BAS will never be purged from the pending list. 
It follows that as the system processes new transactions, the pending list will include more and more orphaned BASs that will never be purged.

Another problem is that accumulating too many BASs can lead to the creation of two block headers with the same batch digest.
For example, in a system with four nodes where all nodes are non-faulty, two thresholds of $F+1$ BASs may be collected.

Due to space limitations, we discuss our solution to these problems in Appendix \ref{GC}.

\begin{figure}[hpt]
	\fbox{
		\vspace{0pt}
		\begin{algorithm}[H] \label{alg:assembler}
			\SetKwBlock{DoParallel}{do in parallel}{end}
			\SetKw{Continue}{continue}
			\KwIn{\\$~~~~~$Streams of batches from batcher nodes $\{\mathcal{B}_1, ..., \mathcal{B}_k\}$, one for each of the $k$ shards.\\
				$~~~~~$A stream of block headers from consensus nodes $\mathcal{H}$.\\
				$~~~~~$An index for blobs of data $\mathcal{I}$.\\
				$~~~~~$A ledger of blocks $\mathcal{L}$.
			}	
			{
				
				\DoParallel{
					\ForEach{$shard\in \{1, .., k\}$}{
						$B \leftarrow \mathcal{B}_{shard}$ \; 
						$\mathcal{I}.Index(B)$ 
					}
					\For{true}{
						$h \leftarrow \mathcal{H}$\;
						\uIf{$\mathcal{I}.Exists(h.Digest)$}{
							$B \leftarrow \mathcal{I}.Retrieve(h.Digest)$\;
							$\mathcal{L}.Append(\big< h,  B\big>)$\; \label{alg:assembler:line:append}
						}\Else{
							\Continue	
						}
					}
				}				
			}
			\caption{$~~$Pseudo-code for assembling blocks from batch attestations and transaction batches}
		\end{algorithm}
	}	
\end{figure}

\subsection{Block Assembly}
\label{sec:assemb}
After each round of consensus in Arma, one or more block headers are persistently stored by the consensus nodes. 
These block headers include digests referencing batches, and a monotonically increasing sequence number. 
The assembler nodes retrieve the block headers from the consensus nodes, while the batches are retrieved from the batcher nodes.

Each block header carries a quorum of signatures from the consensus nodes, which the assembler nodes verify. 
This and the block sequence number guarantee that all assembler nodes commit the same headers in the same order.
For each block header, an assembler node is responsible for retrieving the corresponding batches, attaching them together to form a complete block and subsequently writing it to the ledger as depicted in Algorithm~\ref{alg:assembler}. 

By storing the entire block, which encompasses the block header, the associated transactions, and a quorum of signatures on the block header (which contains a collision-resistant digest of the transactions) in the ledger, the assembler node is effectively an archivist of the Arma system.

As the shards independently generate batches at a potentially different rate from the rate that the corresponding headers are totally ordered, it is impossible to predict the exact number of batches that are going to be fetched before their  corresponding headers are received. Consequently, it is not practical for the assembler to keep a batch in memory until its block header arrives. Thus, batches are promptly written to disk the moment they are fetched by the assembler.

\subsection{Primary Batcher Failover}

While it is easy to see how to deal with a failure of a router, a consensus node, or an assembler, a protocol for dealing with batcher failures must be presented. 
We recall that routers are stateless components, consensus nodes already run a BFT protocol that copes with failures, and assemblers do not interact with one another. On the other hand, primary and secondary batchers interact with each other and the failure of a shard's primary batcher impacts its secondary batchers. 

In BFT consensus protocols, both the view changes (changing of the leader node) and the blocks are totally ordered. 
Meaning that if block $b$ is proposed by a leader of view $v$ and later a view change occurs, incrementing the view to $v+1$ and changing the leader will have all (correct) nodes agree on whether block $b$ was committed as part of view $v$ or $v+1$.

However, in the Arma protocol, the retrieval of batches by secondary batchers from the primary is independent of the retrieval of the updates from the consensus, and specifically, from the primary batcher changes.
As a result, the progress of batches operates asynchronously with respect to the rotation of the primary batcher. 
This may introduce problems such as the loss of transactions, due to a primary batcher change before sufficient BASs reach the consensus, causing transaction batches not to be included in a block. 
If transactions are lost then by definition the protocol is not censorship resistant. Next we discuss the different scenarios and show that by following the Arma protocol transactions are never lost.

We take a look at batch $b^{p}$ sent by primary batcher $p$. Batcher $p$ later crashes and a secondary batcher $q$ takes its place as a primary.
In the simple case that $q$ received $b^{p}$, in addition to a threshold of $F+1$ batchers sending their BASs of $b^{p}$, the batch is ordered by the consensus and there is no loss of transactions.

Another simple case is when a threshold of $F+1$ batchers send their BASs of $b^p$, however the next primary batcher $q$ didn't receive batch $b^p$. This could happen for various reasons and can lead $q$ to include the same transactions of $b^p$ in a new batch made by $q$ ($b^q$). Therefore, transactions may be included twice (in $b^p$ and in $b^q$) but will not be lost. 

If $q$ receives batch $b^p$ and then it becomes the primary it needs to decide whether to propose its own batch $b^q$ with the same transactions of $b^p$. For this, $q$ inspects the BASs totally ordered by the consensus and sets $b^q$ to contain the transactions of $b^p$ if less than $F+1$ BASs were totally ordered for $b^p$.

Finally, the censorship resistance technique is employed to make sure the transaction in $b^p$ are not lost. If less than $F+1$ correct batcher nodes received $b^p$ and $q$ does not include the transactions in $b^p$ in a new batch, $q$ will be considered as malicious and be replaced as the primary batcher. Indeed, since  less than $F+1$ correct batchers received $b^p$, there is at least one transaction in $b^p$ that was sent by a correct client that is still in the memory pool of at least $F+1$ correct batchers \footnote{$2F+1$ total correct batchers, out of which $F$ or less received batch $b^p$, leaving at least $F+1$ not receiving $b^p$.} which will complain after a timeout, leading to a primary batcher failover.

\section{Integration into Hyperledger Fabric}
\label{sec:integration}

In this section, we explain how Arma can be integrated into Hyperledger Fabric.
Additionally in the next section (Section~\ref{sec:eval}), we evaluate the performance of a prototype of a Fabric \cite{HLF} ordering service node with the embedded Arma components.

As previously mentioned, the Arma system comprises different types of nodes, including routers, assemblers, batchers, and consensus nodes.
Conversely, in Fabric, there are only two types of nodes:
Peers, responsible for processing transactions, and ordering service nodes, tasked
with receiving transactions from clients and totally ordering
them into blocks for retrieval by the peers. Within this structure,
it is feasible to embed many of the Arma components within a Fabric
ordering node. While this topology may not fully leverage the scalability
capabilities of Arma, it still enables a substantial increase in the
throughput of the ordering service by an order of magnitude.

By integrating Arma into the Fabric ordering service node, Fabric can benefit from
the improved efficiency and performance offered by Arma's consensus
protocol. This integration represents a significant enhancement for
the ordering service in Fabric, allowing for a substantial
boost in transaction throughput.

In Fabric, ordering nodes perform two main tasks:

\begin{itemize}
	\item Verify transactions are well formed and properly signed by authorized clients.
	\item Totally order the transactions and bundle them into blocks signed by the orderer nodes themselves.
\end{itemize}

We decouple these two tasks by delegating the former task of verifying transactions to Arma router nodes.
The rest of the Arma nodes are embedded in the Fabric ordering service node, as seen in Figure \ref{fig:farma} .

\begin{figure}[h]
		\centering
		\includegraphics[width=14cm]{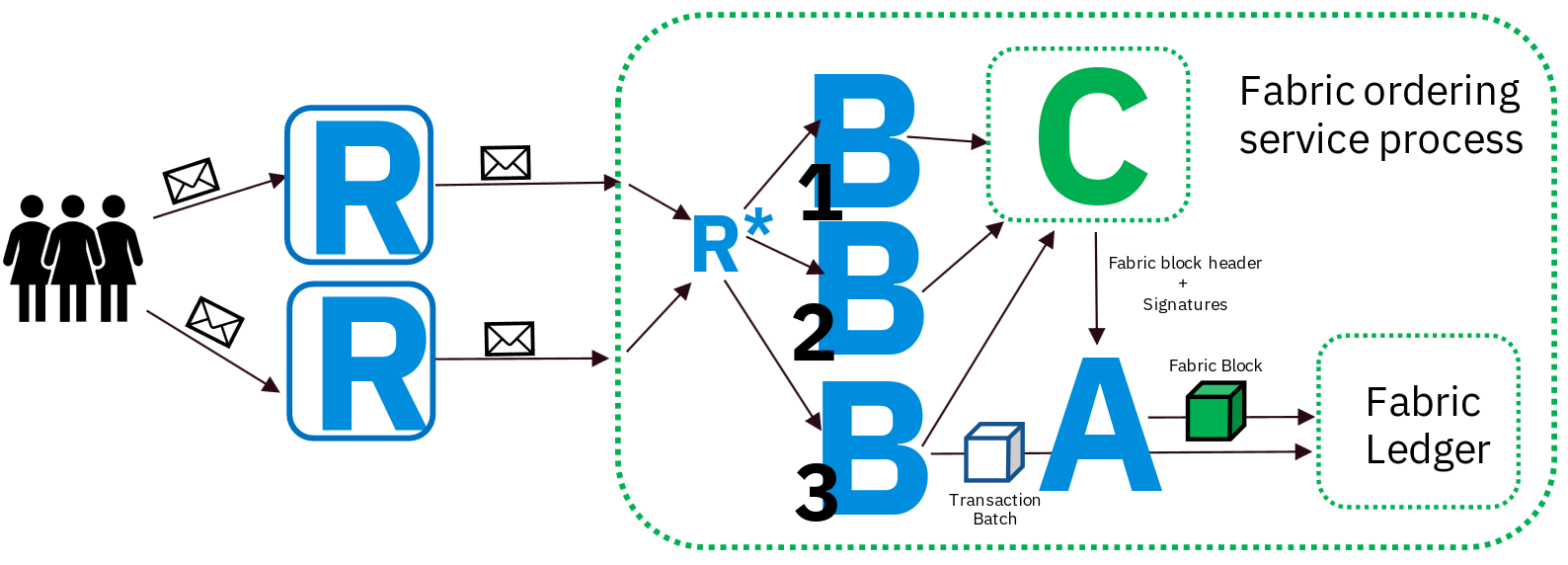}
		\caption{Arma components blue integrated alongside Fabric components (green) in a Fabric ordering node process.
			Transactions from clients are verified by router nodes and are forwarded into the ordering service node, where the $R^*$ router dispatches them to the batcher instances of appropriate shards.}%
		\label{fig:farma}
\end{figure}

As transactions arrive from the router nodes into the ordering node, they are routed to the batcher instance corresponding to the shard computed by the $R^*$ router logic. Then, transaction batches are written into the Fabric ledger as they are received from the primary batcher node. For the consensus node, we re-use Fabric's native BFT library \cite{SmartBFT} as-is, but modify its configuration to order BASs instead of Fabric transactions and to produce batches of signed Fabric block headers. More specifically, in the original Fabric implementation, the BFT orderer embeds the SmartBFT \cite{SmartBFT} consensus library. The leader broadcasts an unsigned block and the nodes sign the block header and piggyback their signatures during the agreement protocol. As a result, a quorum of Fabric signatures is collected by every node at the end of an agreement on a block. In our prototype, we made the SmartBFT leader node broadcast a batch of BASs (and complaints on primary batcher nodes), and instead of signing over the entire batch, they deterministically assemble Fabric block headers and sign over them. The resulting block headers are then passed into the assembler component. Afterwards, the transaction batches are retrieved from the Fabric ledger, and full Fabric blocks are assembled and written to the ledger, ready for retrieval by peers.

\subsection{Chaining Transaction Batches in Hyperledger Fabric}
In some distributed ledgers it is not possible to associate a block header with more than one batch. For example, in Fabric
each block header contains essential information such as a sequence
number, a hash of the previous header, and a hash of the transactions
associated with the block. The transactions are concatenated and then
hashed using a cryptographic hash function. Due to the structure of
a block header, which consists of a single hash, it is not possible
to associate a Fabric block header with more than one batch. Hence,
When integrated with Fabric, Arma's consensus nodes generate a single Fabric block header for $F+1$ BASs.
Within each set of block headers that is totally ordered via the BFT consensus protocol, the Fabric block headers are linked
together in a hash chain.

\section{Evaluation}
\label{sec:eval}

In this section we evaluate the performance of two Arma implementations. First we evaluate a prototype of Arma integrated into Fabric as described in Section~\ref{sec:integration}.
We then evaluate a prototype of a fully distributed stand-alone Arma system, deployed over LAN and WAN.

\subsection{Arma Embedded in Fabric}
\label{sec:eval-fabric}

We evaluate the performance of an Arma prototype integrated into a Fabric ordering service node (OSN). We have incorporated batchers, assemblers, and consensus nodes directly into the OSN process. The router node is simply a function that maps transactions to the appropriate batcher instance within the Fabric OSN process. Thus, each OSN process includes a router, one batcher for each shard, a consensus node, and an assembler.

\subsubsection{Experiment Setup} \label{sec:eval-fabric-setup}
The evaluation was carried out using multiple clients situated in the UK, generating transactions of two different sizes: 300 bytes in one experiment and 3.5 KB in another experiment, the latter corresponds to the standard Fabric transaction size. Each transaction was sent to all ordering service nodes in parallel. Subsequently, the clients retrieved blocks from these ordering nodes and calculated both throughput and latency.

The ordering service nodes were deployed across three distinct data centers located in the UK, Italy, and France. Latency measurements between these data centers were as follows: 10ms between England and France, 20ms between England and Italy, and 17ms between Italy and France.

For the infrastructure, the ordering service nodes were hosted on dedicated bare-metal Ubuntu 22.04 LTS machines. These machines were equipped with 96 Intel Xeon 8260 2.40GHz processors (comprising 48 cores with 2 threads per core) and 64GB of RAM. Additionally, they featured a Broadcom 9460-16i RAID 0 configuration with two SSDs.

We conducted experiments with various numbers of Fabric ordering service nodes starting from 4 nodes up until and including 16 nodes.

\subsubsection{Result Analysis}
The results of the evaluation can be seen in Figure~\ref{fig:evaluation}.
The right and left graphs show evaluation with transaction sizes of 300 bytes and 3500 bytes respectively.

\begin{figure}[h]
		\includegraphics[scale=0.13]{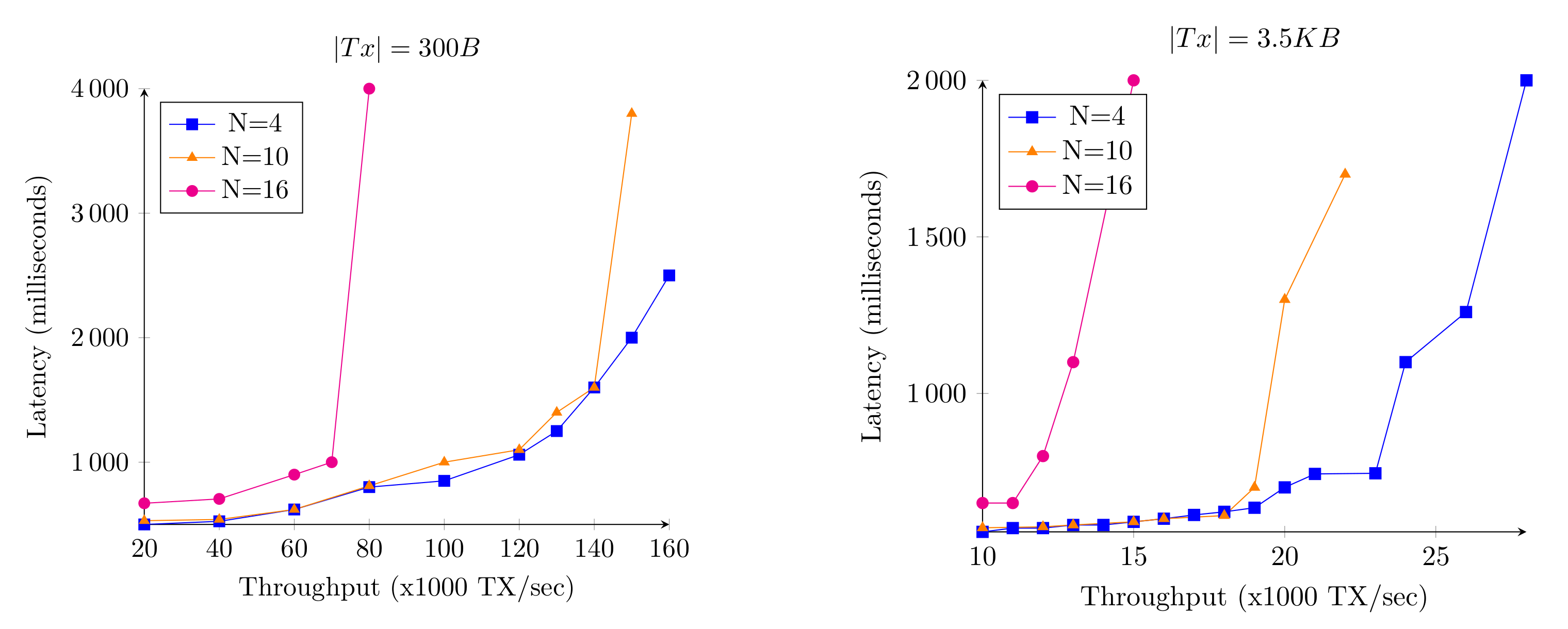}
	\caption{Performance with varying number of nodes ($N$) and varying TX size.}
	\label{fig:evaluation}						
\end{figure}

First, it can be clearly seen that the higher the amount of nodes the lower the throughput. This is expected, as more data needs to be transmitted among the nodes.

An interesting fact is while in the experiment with the 3.5KB transaction size, the transactions are $11.6$ times bigger than the experiment with the 300B transaction size, the difference in transactions per second totally ordered is only $\sim$ 6 times higher for the smaller transactions. This may imply an overhead that stems from transaction batching and censorship resistance mechanisms.

It could however be associated to the fact that parameters such as batch size and maximum batch latency in both experiments were identical, and calls for further fine tuning of parameters as a function of the transaction size.

It is worth mentioning that while our evaluation involved embedding all components within a single Fabric OSN for simplicity, as we will see in the next section, this is not the most scalable and performant deployment strategy for Arma. However, for Fabric, this represent an order of magnitude speedup in performance (approx. 2K vs. 20K TX/sec).


\subsection{Distributed Arma}

In the experiments described below we evaluated a fully distributed implementation of Arma, in which each component is running in a separate process. 

In all experiments we used 4 parties, i.e.: 4 routers, 4 consensus nodes and 4 assemblers, as well as $4\times|shards|$ batchers. A separate set of clients submitted 160B transactions to all routers. The maximal batch size was 10000, and the maximum duration of filling a batch with transactions was 500ms.

\subsubsection{LAN Deployment}

The LAN experiment had 14 bare metal servers, each with 40 CPUs and 64GB of RAM. All servers were located in the same data center, with $<1ms$ of latency between them. In all experiments a router and a consenter process were collocated on a server (4 servers), and the remaining batcher and assembler processes were evenly distributes across the other 10 servers (4+8, 4+16, 4+32, assembles+batchers, for 2,4,8 shards, respectively). Clients located in the same data center submitted transactions.

\subsubsection{WAN Deployment}

In the WAN experiment we used 44 virtual machines, each with 32 CPUs and 64GB of memory. The machines where deployed across 3 data centers in the UK, and Italy (same latencies as reported in Sec. \ref{sec:eval-fabric-setup}). Clients located in one of the data centers submitted transactions. Each process was deployed on its own virtual machine.

\subsubsection{Results}

Figure 7 depicts the results of the LAN experiments whereas Figure 8 depicts the results of the WAN experiments. 

In these experiments we let multiple clients submit TXs to the routers at a specific rate, and measure the end-to-end TX latency. As the rate is stepped up, latency increases. At some point, when the latency is $>5s$, we stop the experiment. The highest rate which resulted in latency $<5s$ is reported in the graphs.

\begin{figure}[h]
\includegraphics[scale=0.13]{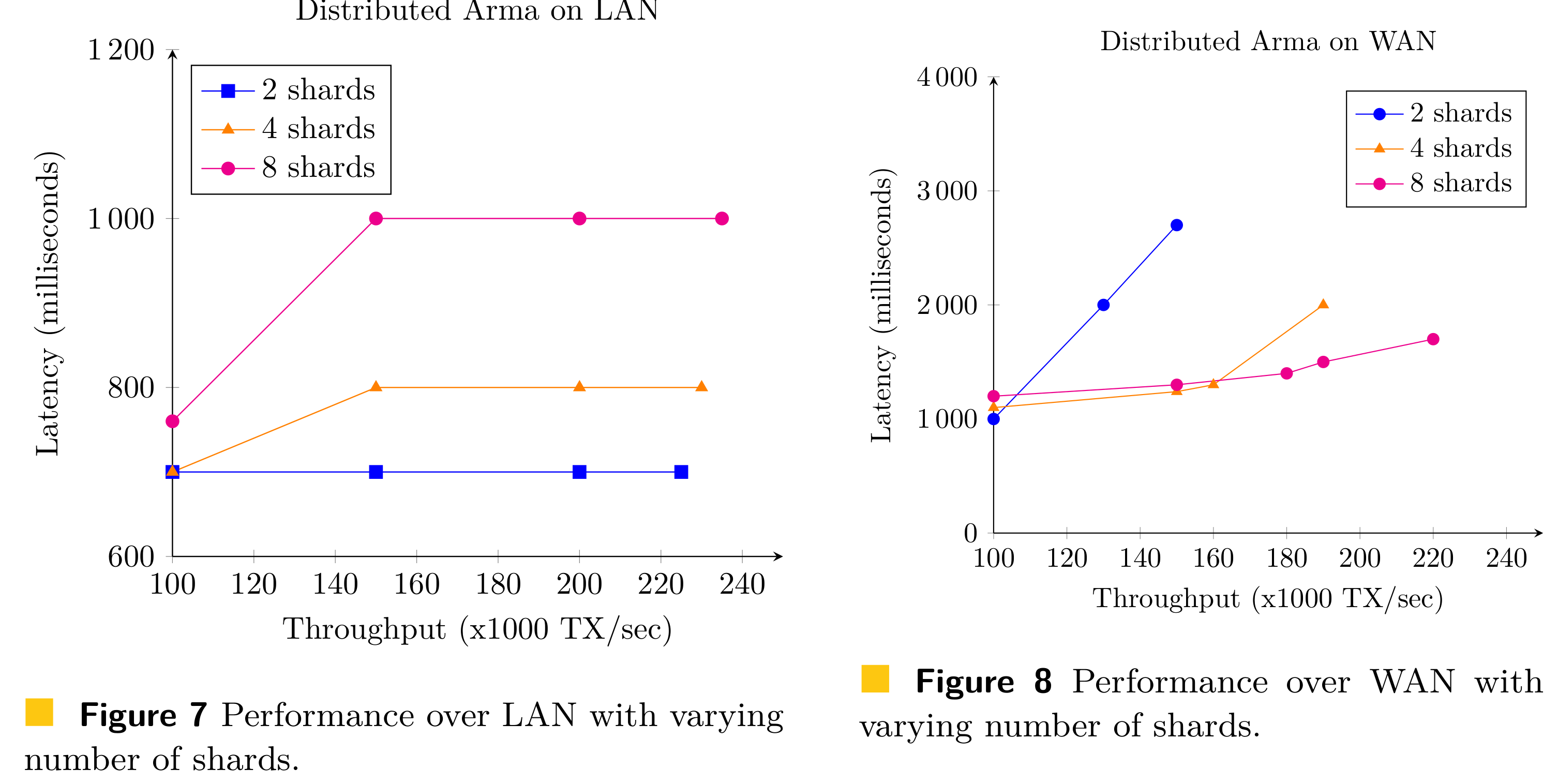}
\end{figure}

Figure 7 shows that in the LAN experiment, more shards result in better performance - from 225K to 230K to 240K TX/sec, for 2,4,8 shards, respectively. This comes at a cost of increased latency, from 700 to 800 to 1000 ms. The modest increase in performance can be attributed to the fact that even though we deployed more batcher nodes when increasing the number of shards, they were deployed on the same 10 bare metal servers.

Figure 8 shows that in the WAN experiment, more shards result in a more pronounced increase in performance -  from 150K to 190K to 220K TX/sec, for 2, 4, 8 shards, respectively. In the WAN experiment, we deployed more resources as we deploy more shards. This explains the more pronounced increase in performance.

Note that when demand is low (e.g. 100K TX/sec) deployment with different number of shards perform at about the same latency (700ms for LAN \& 1000ms for WAN). It is only when a deployment with a given number of shards become saturated and the latency starts to explode that we see the benefits of adding more shards. 

When compared to Arma embedded in Hyperledger Fabric, the distributed Arma architecture shows better performance and highlights the potential gains that can be achieved with applying the distributed Arma architecture to the Fabric ordering service.

\section{Protocol Analysis}
\label{sec:proof}
We analyze the correctness and soundness of the Arma protocol explained in Section \ref{sec:arch}. 
Arma batchers and assemblers use a BFT consensus as a primitive for totally ordering batch attestation shares and complaints. 
Therefore,  the properties of safety, liveness, and censorship resistance hold for Arma only if they hold for the underlying consensus primitive. 
We assume hereafter that up to $F<\frac{N}{3}$ parties may fail or be unreachable. 
Assuming that there is a single router node per party, this translates to having up to $F$ routers, $F$ batchers per shard, and $F$ consensus nodes fail. 
We start by formally defining the properties that the Arma consensus protocol must satisfy, after which we prove them one by one. 
Due to lack of space, the proofs are deferred to Appendix~\ref{proofs}.
For termination and censorship resistance which are both liveness properties, we assume, similarly to \cite{HotStuff}, that there exists a Global Stabilization Time (GST) after which there is an upper bound on network delay $\Delta$. 
Note that without such an assumption, it is impossible \cite{FLP} to even achieve a crash fault tolerant agreement. 
As for validity we make no such an assumption. 
We also assume it takes up to $\delta\left(\Delta\right)$ time to totally order a message through the BFT consensus used in Arma.

\begin{property}[\textbf{Termination}]\label{prop:termination} After GST, if a correct party receives a transaction $tx$ sent by a correct client, every correct assembler eventually commits a block $B$ containing $tx$.
\end{property}

\begin{property}[\textbf{Agreement}]\label{prop:agreement}If a correct Assembler node commits block $B_{i,d}$ with sequence $i$ and digest $d$ then every other correct assembler node commits block $B_{i,d}$.
\end{property}

\begin{property}[\textbf{Censorship resistance}]\label{prop:cens}After GST, there exists a time $T_{max}$ such that if a correct client submits a transaction $tx$ to at least $N-F$ parties, every correct assembler commits a block $B$ containing $tx$ within $T_{max}$ time.
\end{property}

\begin{property}[\textbf{$\alpha$-validity}]\label{prop:val}If a correct assembler commits block $B$ containing $tx$, then the probability that a client submitted $tx$ is at least $\alpha$.
\end{property}

Note that unlike protocols such as Mir-BFT \cite{Mir-BFT} where each batch sent by the leader is entirely verified by the correct nodes achieving 1-validity, Arma relaxes this property by allowing a portion of transactions sent from a primary batcher to be totally ordered without verifying them. As explained in Section~\ref{sec:design}, this method can be employed when Arma is only used for totally ordering transactions, and there is an additional transaction processing layer that consumes blocks produced by Arma batchers, and ignores invalid transactions. 
A prominent example for such a system is Hyperledger Fabric~\cite{HLF} which optimistically executes transactions on a snapshot of the world state, after which it totally orders them by a consensus service and later on validates the transactions, only taking into account transactions their read set are up to date.

\section{Conclusions}
\label{sec:future}

At its essence, Arma formulates a technique for amplifying the performance of a consensus protocol by decoupling data dissemination and validation from the actual consensus mechanism. 
As demonstrated by the evaluation of our Arma prototypes, it exhibits considerable potential in enhancing the scalability and performance of BFT consensus. 
Two key insights that stem from this work are the ability to enhance the performance of a slow, non-pipelined consensus like SmartBFT \cite{SmartBFT} by employing the methodologies explained, and how to add censorship resistance and deduplication to a system that decouples data dissemination and consensus like 'Narwhal and Tusk' \cite{Narwhale}.

As can be seen by our performance evaluation, while components run by a party can be integrated within the same machine similarly to popular protocols such as PBFT \cite{PBFT} and HotStuff \cite{HotStuff}, the throughput is smaller than the distributed Arma deployment where each component runs on its own machine. 
We recognize the need for additional evaluation of the distributed Arma deployment, namely with more parties, and consider this future work.

\newpage
\bibliography{references}

\newpage
\appendix

\section{Proofs}
\label{proofs}

Before we prove property \ref{prop:termination}, we prove a lemma:

\begin{lemma}[Primary rotation] If $F+1$ correct parties in a shard receive a transaction $tx$ but a primary batcher in that shard does not include it in a batch, then a different node becomes the primary for that shard.
	\label{lemma:primary rotation}
\end{lemma}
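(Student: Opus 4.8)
The plan is to show that the premise forces all $F+1$ correct parties to act as complaining secondaries against the same primary, thereby reaching the $F+1$-complaint rotation threshold described in Section~\ref{sec:censorship-resistance}. The first step is to argue that the offending primary must be Byzantine. A correct primary, by Algorithm~\ref{alg:primbatcher}, repeatedly draws batches from its memory pool and broadcasts them; hence any transaction residing in its memory pool is eventually placed in a batch. Consequently, if the primary were correct and among the $F+1$ parties that received $tx$, it would include $tx$, contradicting the hypothesis. Therefore the primary is not one of these $F+1$ correct parties, and all $F+1$ of them are \emph{secondaries} relative to the current term.

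Next I would invoke the censorship-resistance mechanism of Section~\ref{sec:censorship-resistance} at each of these $F+1$ correct secondaries. Each such secondary holds $tx$ in its memory pool and, after the prescribed timeout without seeing $tx$ in a batch pulled from the primary, forwards $tx$ to the router of the primary's party. Under the GST assumption used for Property~\ref{prop:termination} and Property~\ref{prop:cens}, this delivery completes within bounded time, so $tx$ enters the primary's memory pool; a correct primary would then batch it, so the persistent non-inclusion again confirms the primary is faulty. After a second timeout with $tx$ still absent, each of the $F+1$ correct secondaries broadcasts a complaint vote against the current term to all consensus nodes. Since each reaches at least $N-F$ consensus nodes, after GST all $F+1$ distinct complaints are totally ordered. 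By the rotation rule, once $F+1$ complaints from distinct parties about that term are ordered, the term is incremented and a new primary is designated, which is exactly the conclusion.

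The main obstacle is handling the tightness and consistency of the complaint count. The threshold is exactly $F+1$ and we have exactly $F+1$ correct complainants, so the argument leaves no slack: I must ensure that each of the $F+1$ correct secondaries indeed complains and does so \emph{about the same term}. This calls for a small bootstrapping argument: either the term has already advanced before all complaints are issued, in which case the conclusion already holds, or the term is stable throughout, in which case all $F+1$ correct secondaries issue their complaints against this common term and the threshold is met. A secondary care point is that Byzantine parties cannot spuriously rotate a correct primary, but that direction is not needed here; we only use that $F+1$ honest complaints suffice and that they cannot be suppressed, since the consensus primitive totally orders every complaint submitted to $N-F$ nodes.
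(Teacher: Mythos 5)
Your proposal is correct and takes essentially the same route as the paper's proof: each of the $F+1$ correct parties holds $tx$ in its memory pool as a secondary, times out because only a batch from the primary containing $tx$ could remove it, sends a complaint vote to the consensus nodes, and the totally ordered $F+1$ complaints trigger designation of a new primary. Your write-up is in fact somewhat more careful than the paper's own (which is terser)---you explicitly rule out the primary being among the $F+1$ receivers, include the intermediate forward-to-primary step from Section~\ref{sec:censorship-resistance}, and handle the same-term counting issue---but the underlying argument is identical.
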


\begin{proof}
	As written in Section \ref{sec:batching}, if a transaction is received by $F+1$ secondary batchers, it enters their memory pool. The only way to be removed from the memory pool is via receiving a batch from the primary that includes it. 
	As already mentioned, if after a configured period of time, the transaction is not removed from a memory pool of a secondary batcher, it sends a complaint to the BFT consensus nodes. Once complaints from $F+1$ parties have been totally ordered through the BFT consensus, the next batcher in the shard is designated as primary, in a deterministic manner (e.g., using round-robin).
\end{proof}

\begin{theorem}[Termination] Arma satisfies termination (property \ref{prop:termination}). \label{theorem:termination}
\end{theorem}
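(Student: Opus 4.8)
The plan is to trace the transaction $tx$ along the Arma pipeline from client submission to block commitment, and to argue that at no stage can a bounded number of Byzantine parties permanently stall it. First I would observe that because $tx$ is sent by a \emph{correct} client, it is broadcast to the routers of all $N$ parties; after GST every message between correct parties is delivered within $\Delta$, so each of the $N-F \ge 2F+1$ correct parties' routers receives $tx$, maps it deterministically to the same shard $s$, and forwards it to that party's batcher for shard $s$. Hence at least $2F+1 \ge F+1$ correct batchers in shard $s$ hold $tx$ in their memory pool.

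Next I would invoke the primary-rotation guarantee. By Lemma~\ref{lemma:primary rotation}, as long as $F+1$ correct batchers hold $tx$ and the current primary fails to include it in a batch, the shard's primary is rotated (deterministically, e.g. round-robin). Since at most $F$ of the $N$ parties are Byzantine, and a transaction leaves a correct secondary's memory pool only once it is delivered inside a batch from the primary, $tx$ persists in those $F+1$ correct memory pools across rotations until it is batched; after at most $F+1$ rotations a \emph{correct} batcher necessarily becomes primary and, since it does not censor, includes $tx$ in some batch $b$ and persists $b$ to disk.

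Then I would push $b$ through consensus. The correct primary broadcasts its BAS for $b$ to the consensus nodes, and the $\ge F+1$ correct secondaries pull $b$, persist it, and broadcast their own BASs; thus at least $F+1$ distinct BASs for $b$ are submitted. After GST each is totally ordered within $\delta(\Delta)$, and Algorithm~\ref{fig:consensuss} aggregates the $F+1$ threshold (possibly across several rounds via the pending list), producing a block header for $b$ carrying a quorum of consensus signatures. Finally, the assemblers receive this header; because $b$ was persisted by at least one correct batcher it is retrievable, so by Algorithm~\ref{alg:assembler} every correct assembler fetches $b$, forms the block $B \ni tx$, and commits it. Agreement of the BFT primitive, together with the sequence number and quorum of signatures on each header, guarantees that all correct assemblers commit the \emph{same} $B$, establishing Property~\ref{prop:termination}.

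The hard part will be the middle step: reconciling the \emph{asynchronous} pulling of batches by secondaries with the rotation of the primary, so as to show $tx$ is never ``lost'' during a failover. Concretely, one must rule out the race in which a primary is demoted after some---but fewer than $F+1$---BASs for a batch containing $tx$ have been ordered; here I would rely on the failover logic of Section~\ref{sec:batching}, whereby an incoming primary that received the prior batch re-includes its transactions unless $F+1$ BASs were already ordered, combined with the invariant that $tx$ remains in $F+1$ correct memory pools until it is genuinely committed. Showing that this invariant survives arbitrarily many rotations, together with the termination guarantees of the underlying consensus after GST, is what makes the argument go through.
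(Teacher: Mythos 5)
Your overall route is the same as the paper's: get $tx$ into the memory pools of at least $F+1$ correct batchers, use Lemma~\ref{lemma:primary rotation} to rotate out faulty primaries, then push the resulting batch through the $F+1$-BAS threshold of Algorithm~\ref{fig:consensuss} and into Algorithm~\ref{alg:assembler}. However, there is a genuine gap in the middle step: you argue that ``after at most $F+1$ rotations a correct batcher necessarily becomes primary and, since it does not censor, includes $tx$,'' but you never rule out that the $F$ Byzantine parties themselves keep triggering rotations, ejecting each correct primary before (or after) it takes office and thereby cycling primaries forever. Termination needs the converse of Lemma~\ref{lemma:primary rotation} as well: a \emph{correct} primary can never be voted out. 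The paper proves this explicitly --- rotation requires $F+1$ complaint votes about the current term, so the $F$ malicious parties need at least one correct secondary to complain; a correct secondary complains only if a transaction it forwarded to the primary is not batched within the timeout or if the primary's batch contains invalid transactions, and a correct primary's batches never satisfy either condition. Without this argument, your claim that the rotation sequence stabilizes at a correct primary long enough for $tx$ to be batched, attested by $F+1$ BASs, and assembled, does not follow.

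Two smaller remarks. First, your opening step assumes every correct router receives $tx$ after GST because the client broadcasts to all $N$ parties; the paper instead derives the needed bound from the client's submission rule (acknowledgement by $N-F$ parties, hence at least $F+1$ \emph{correct} batchers hold $tx$), which is the weaker and sufficient premise --- either gloss is fine since only the $F+1$ figure is used by Lemma~\ref{lemma:primary rotation}. Second, the failover race you flag as ``the hard part'' (a primary demoted with fewer than $F+1$ BASs ordered) is handled in the paper outside this proof, in the failover discussion of Section~\ref{sec:batching} and the primary-failover subsection; the paper's termination proof itself does not need it once the no-false-ejection argument is in place, because from that point the correct primary's batch deterministically reaches the threshold. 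So the decisive missing idea is the complaint-threshold argument, not the failover race.
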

\begin{proof} 
	A correct party receives a transaction via its router and then forwards it to its batcher. Until the client gets an acknowledgement of the transaction being enqueued in the batcher, it does not consider the transaction to be successfully submitted to that party. A correct client sends a transaction to the routers of all parties, but the delivery of the transaction to up to $F$ parties can fail due to faulty behavior. 
    As mentioned in Section \ref{sec:batching}, a correct client considers a transaction as successfully submitted once it has been received by $N-F$ routers. In this manner, the transaction will reach at least $F+1$ correct batchers.
	A malicious primary batcher may not include the transaction in a batch. In such a case and since at least $F+1$ correct batchers received the transaction, Lemma \ref{lemma:primary rotation} guarantees that there exists a period of time after which the faulty primary batcher ceases being the primary for the shard.

	It remains to argue that a correct primary batcher will not be replaced due to complaints from malicious batcher nodes.
	Indeed, the only way a correct primary batcher can be designated to not be the primary is if $F+1$ complaint votes against its party are collected. 
	It follows that $F$ malicious nodes attempting to vote out a correct primary need a correct node to also send a complaint vote against that primary. 
	A correct secondary sends a complaint vote against a primary in one of two cases: The primary does not include in a batch a transaction that the secondary has received and dispatched to the primary or the primary includes invalid transactions. 
	Batches generated by a correct primary will never satisfy these cases, and therefore, a correct secondary will never cast a complaint vote against it. A correct primary, therefore, will never be falsely ejected by the faulty nodes. 

	We next show that the transaction will be included in a future block by all correct assemblers.
	A correct primary batcher that receives $tx$ inserts it into its memory pool $\mathcal{M}$. Then, as shown in Algorithm $\ref{alg:primbatcher}$, a secondary batcher will eventually fetch a batch $b$ from the primary which will contain $tx$ and append the batch $b$ to its ledger. Eventually, $b$ will be replicated to at least $F$ correct nodes via lines 3-4 in Algorithm \ref{alg:secbatcher}, and the corresponding batch attestation shares will be submitted to be totally ordered with the BFT consensus via line 5 in Algorithm \ref{alg:primbatcher} and line 12 in Algorithm \ref{alg:secbatcher}. 
	Then, following Algorithm \ref{fig:consensuss}, $F+1$ batch attestation shares will be collected to trigger the creation of a new block header which contains $b$'s digest, which will be signed by the consensus nodes as explained in Section \ref{sec:collecting-batch-attestation-shares}. For every block header there are $F+1$ batcher nodes from different parties that persisted the corresponding batch, therefore at least one correct node has this batch. Both block header and the batch $b$ are then eventually retrieved by correct assembler nodes as per lines 7 and 9 respectively in Algorithm \ref{alg:assembler}, and the batch $b$ and its block header are then both committed into the ledger in line 10. 
\end{proof}

\begin{theorem}[Agreement] 
	Arma satisfies Agreement (property \ref{prop:agreement}).
\end{theorem}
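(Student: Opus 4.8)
The plan is to reduce Agreement to the safety guarantee of the underlying BFT consensus together with the collision resistance of the batch digest. The argument proceeds in two layers: first establishing that the block \emph{header} committed at a given sequence $i$ is unique across correct assemblers, and then that the \emph{batch} attached to a given digest $d$ is likewise unique.

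First I would argue header uniqueness. Recall from Section~\ref{sec:collecting-batch-attestation-shares} that the consensus nodes totally order the BASs and then process them deterministically via Algorithm~\ref{fig:consensuss} to produce block headers, each carrying a quorum of consensus-node signatures and a monotonically increasing sequence number. Suppose, toward a contradiction, that two correct assemblers committed distinct headers $B_{i,d}$ and $B_{i,d'}$ with $d \neq d'$ at the same sequence $i$. Each of these headers must carry a valid quorum of signatures, and since any two quorums intersect in at least one correct party, that correct consensus node would have signed two conflicting headers for the same sequence $i$. This contradicts the fact that a correct consensus node produces headers deterministically from the agreed-upon order of BASs and therefore signs at most one header per sequence. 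Hence the header at sequence $i$ is unique, and in particular $d = d'$.

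Next I would argue batch uniqueness given the digest. For a digest $d$, an assembler retrieves a batch $B$ from the batchers and appends it only if $B$ is indexed under $d$ (Algorithm~\ref{alg:assembler}, line~\ref{alg:assembler:line:append}). Because $d$ is a collision-resistant hash (or Merkle root) of the batch's transactions, any batch matching $d$ must, except with negligible probability, be the same set of transactions in the same order. Consequently, every correct assembler that commits at sequence $i$ attaches identical batch contents, and the committed block $B_{i,d}$ is identical across correct assemblers. To then show that \emph{every} correct assembler commits $B_{i,d}$ (not merely that committed blocks agree), I would invoke availability: the quorum of signatures on the header means at least $F+1$ correct consensus nodes hold it and relay it on the stream $\mathcal{H}$, and since $F+1$ BASs for the batch were totally ordered, at least one correct batcher persisted the batch and can serve it, so every correct assembler eventually retrieves both and commits.

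The main obstacle is the header-uniqueness step. It must carefully combine two ingredients that are each insufficient on their own: the quorum-intersection property applied to the consensus signatures, and the determinism of the BAS-processing in Algorithm~\ref{fig:consensuss}. Quorum intersection alone only forces the intersecting correct node to have signed both headers; it is the determinism of header construction from the (safety-guaranteed) total order of BASs that makes signing two different headers for the same sequence impossible for a correct node, thereby closing the contradiction.
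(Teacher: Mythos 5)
Your proposal is correct and follows essentially the same route as the paper's own proof: the identical two-case decomposition in which $d \neq d'$ is ruled out by quorum intersection forcing a correct consensus node to sign conflicting headers, and $d = d'$ with differing batches is ruled out by collision resistance of the digest. Your additional availability argument (that every correct assembler \emph{eventually} commits the block, since at least one correct batcher holds the batch and correct consensus nodes serve the header) goes slightly beyond the paper's written proof, which only establishes that committed blocks agree, and is a welcome strengthening rather than a deviation.
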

\begin{proof}
	Assume by contradiction that two distinct blocks $B_{i,d}$ and $B'_{i,d'}$ with sequence $i$ were committed by different correct assemblers. 
	We distinguish between two cases: one where digests $d$ does not equal $d'$ and one where the digests are identical. 
	If $d \neq d'$, then according to line~\ref{alg:assembler:line:append} in Algorithm \ref{alg:assembler}, two different headers were received from the consensus nodes. As mentioned in Section \ref{sec:assemb}, the consensus nodes assemble a quorum of signatures over each header and a correct assembler verifies these signatures. Since two quorums intersect at at least one correct node, this entails that a correct consensus node has signed two conflicting headers, contradicting it being a correct consensus node.
	Therefore $d = d'$.
	As depicted in lines 9-10 in Algorithm \ref{alg:assembler}, blocks are created by retrieving the batches matching the digests in the totally ordered headers. 
	If $B_{i, d} \neq B'_{i, d}$, then this implies that two different blocks map to the same digest, breaking thus the collision resistance property of the hash function used to compute the digest. 
	Consequently, we conclude that correct assemblers will commit to the same blocks in the same order.  
\end{proof}

\begin{theorem}[Censorship resistance] 
	Arma satisfies the censorship resistance property \ref{prop:cens}.
\end{theorem}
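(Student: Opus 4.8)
The plan is to mirror the Termination proof (Theorem~\ref{theorem:termination}) but to replace every ``eventually'' with an explicit time bound, exploiting that after GST each message traverses the network in at most $\Delta$ and each message is totally ordered by the BFT consensus in at most $\delta(\Delta)$. First I would observe that since the correct client submits $tx$ to at least $N-F$ parties and at most $F$ are Byzantine, at least $N-2F \geq F+1$ correct batchers of the target shard hold $tx$ in their memory pool once routing completes, which takes at most $\Delta$ after GST.

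Next I would bound the primary-rotation phase. Let $\tau$ denote the configured complaint timeout. If the current primary is faulty and withholds $tx$, then by Lemma~\ref{lemma:primary rotation} the $F+1$ correct batchers holding $tx$ issue complaints after at most $\tau$, plus the direct-resend safeguard (which adds at most $\tau+\Delta$), and these $F+1$ complaints are totally ordered within $\delta(\Delta)$, triggering a deterministic round-robin rotation. Because the round-robin cycles through the $N$ parties and at most $F$ of them are Byzantine, at most $F$ such rotations can occur before a correct party is designated primary; hence the rotation phase lasts at most $F\cdot\big(2\tau+\Delta+\delta(\Delta)\big)$. I would then reuse the non-ejection argument from the Termination proof: since a correct primary produces only well-formed, non-censoring batches, no correct secondary ever complains about it, so the $F$ Byzantine parties alone cannot reach the $F+1$ threshold and the correct primary remains in place once installed.

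Finally I would bound the commit phase under a correct primary. The correct primary includes $tx$ in its next batch within one batching cycle (capped by the maximum batch-filling latency), persists it, and broadcasts it; within $\Delta$ each correct secondary fetches and persists the batch, and all $F+1$ correct batchers submit their BASs, which are totally ordered within $\delta(\Delta)$. By Algorithm~\ref{fig:consensuss} the $(F{+}1)$-th BAS triggers creation and quorum-signing of the block header within one further consensus round, after which every correct assembler retrieves the batch (at most $\Delta$) and the header and commits the block, exactly as in the last step of the Termination proof. Summing the routing, rotation, and commit contributions yields an explicit $T_{max}$ that is independent of $tx$, establishing the property.

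The main obstacle is making the rotation bound rigorous: I must argue that the deterministic round-robin genuinely makes progress toward a correct primary on every rotation, so that no faulty primary is revisited before a correct one is reached, and that the $\tau$ timeout together with the direct-resend safeguard cannot interact to delay a legitimate complaint indefinitely. The subtle point is that the $F+1$ correct batchers may start their timers at slightly different moments, so the complaints against a given term need not be simultaneous; I would handle this by noting that all correct batchers hold $tx$ within $\Delta$ of one another, so their timeouts fire within $\Delta$ of each other, and the threshold is reached once the last of them is ordered, keeping the per-rotation delay bounded.
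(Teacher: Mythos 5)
Your proposal is correct and follows essentially the same route as the paper's proof: the transaction reaches at least $F+1$ correct batchers, a faulty primary is removed via Lemma~\ref{lemma:primary rotation}, at most $F$ successive rotations occur before a correct primary is installed (which is never ejected), and the Termination machinery then yields the commit, giving an explicit bound $T_{max}$. The only difference is granularity: the paper abstracts each rotation's cost into a single timeout $T_{censor}$ and states $T_{max}=F\cdot T_{censor}+\delta(\Delta)+2\Delta$, whereas you unfold that constant into its constituents (complaint timeout, direct-resend safeguard, ordering delay) and additionally account for the initial routing delay and batch-filling latency, which is a slightly more careful bookkeeping of the same argument.
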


\begin{proof}
	A correct client submits a transaction $tx$ to all parties. It considers a transaction as submitted once $N-F$ parties confirmed the transaction has been enqueued into the batcher nodes of the shard corresponding to the transaction. If the primary batcher node of the shard is correct, then from Theorem  \ref{theorem:termination} censorship resistance holds. Otherwise, the primary batcher is faulty and therefore it does not include $tx$ in a batch. Since $tx$ was received by at least $F+1$ correct secondary batcher nodes, by Lemma \ref{lemma:primary rotation}, after a timeout period denoted $T_{censor}$, a different batcher will be the primary for the shard. The maximum number of successive faulty primary batcher nodes is $F$. Thus, after at most $F \cdot T_{censor}$ time, the primary batcher node will be correct, and according to Theorem \ref{theorem:termination} $tx$ will be included in a block $B$ that each assembler will commit. Now the time that takes to commit a transaction submitted by a correct client is upper-bounded by $T_{max}=F \cdot T_{censor}+ \delta\left(\Delta\right) + 2\cdot \Delta$. \footnote{$\Delta$ to pull the batch and headers by the assemblers, $\Delta$ for the batch to be replicated by the batchers, and $\delta\left(\Delta\right)$ to be ordered by BFT consensus.} 
\end{proof}

\begin{theorem}[$\alpha$-validity] Arma satisfies $\alpha$-validity (property \ref{prop:val}).
\end{theorem}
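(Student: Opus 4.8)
The plan is to translate the event ``a bogus transaction is committed'' into a statement about the random samples drawn by the correct secondary batchers, and then to bound its probability with an elementary (hypergeometric) estimate together with a union bound. Throughout, I would fix a batch of $M$ transactions produced by a (possibly malicious) primary and recall from Algorithm~\ref{alg:secbatcher} that a correct secondary batcher signs a BAS for this batch only after sampling $R$ of its $M$ transactions uniformly at random; if any sampled transaction is invalid it issues a complaint and returns without signing. I would also use the fact from Section~\ref{sec:collecting-batch-attestation-shares} that a batch is committed only after $F+1$ BASs over it are totally ordered, and that at most $F$ batchers are Byzantine.

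First I would establish the structural fact that a committed batch always carries the endorsement of a correct batcher: since $F+1$ distinct BASs are required and at most $F$ come from Byzantine parties, at least one correct secondary batcher signed the batch, and hence at least one correct batcher drew an $R$-sample containing no invalid transaction. This is the only handle the protocol gives us, and it is the crux of the argument: commit requires merely \emph{one} correct batcher to miss, so having many batchers does not by itself help, and the quantitative bound must come entirely from the smallness of the per-batcher miss probability.

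Next I would quantify that miss probability. If the batch contains $j$ invalid transactions, the chance that a fixed correct batcher's uniform $R$-subset avoids all of them is $\binom{M-j}{R}/\binom{M}{R}$, which is decreasing in $j$; so over the regime $j > K$ the worst case is $j = K+1$, giving
\[
p^{*} \;=\; \frac{\binom{M-K-1}{R}}{\binom{M}{R}} \;=\; \prod_{i=0}^{R-1}\frac{M-K-1-i}{M-i} \;\le\; \Bigl(1-\tfrac{K+1}{M}\Bigr)^{R}.
\]
A union bound over the at most $N$ correct batchers then shows that a batch carrying more than $K$ invalid transactions is committed with probability at most $N\,p^{*}$. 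Consequently, conditioned on commit, a batch contains at most $K$ invalid transactions except with probability $N\,p^{*}$, and a transaction appearing in a committed block was indeed submitted by a client with probability at least
\[
\alpha \;=\; \bigl(1-N\,p^{*}\bigr)\cdot\frac{M-K}{M},
\]
where $R$ is chosen as a function of $M$, $K$ and $N$ to drive $N\,p^{*}$ below the desired slack.

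The main obstacle I anticipate is not the counting but the modelling: because a single undetected correct signer already suffices for commit, the argument cannot rely on a threshold of detectors, and one must phrase the guarantee at the granularity the property intends. In particular I would need to pin down whether $tx$ ranges over an adversarially placed transaction or over the batch as a whole, since the per-batch guarantee (``at most $K$ invalid out of $M$'') only yields a per-transaction statement after accounting for how the adversary may position $tx$ among the $M$ slots; this is where the $\tfrac{M-K}{M}$ factor enters and where the precise value of $\alpha$ is fixed.
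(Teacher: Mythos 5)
Your proposal is correct and takes essentially the same approach as the paper: both arguments hinge on the observation that, since a committed batch needs $F+1$ BASs and at most $F$ batchers are Byzantine, commitment of a bad batch requires at least one correct secondary batcher's uniform random sample to miss every invalid transaction, and both then bound this per-batcher miss probability. The differences are only refinements on your side --- you keep the exact hypergeometric (without-replacement) expression with the bound $\bigl(1-\tfrac{K+1}{M}\bigr)^{R}$ and add a union bound over all correct batchers the adversary might court, whereas the paper deliberately simplifies to sampling with replacement (miss probability $\alpha^{K}$, with $\alpha$ the valid fraction and $K$ the sample size) and to the adversary targeting a single correct batcher --- so the decomposition and the key estimate coincide.
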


\begin{proof}
	For simplicity of analysis, we assume that all $F$ faulty nodes are controlled by a single adversary, and that the adversary has control over the primary batcher of a shard and $F-1$ secondary batchers.
	As the adversary already controls $F$ batchers, in order for a batch to be output by Arma, the adversary needs to collect an additional batch attestation share. 
	Therefore, it is sufficient for the adversary to send the batch to an additional correct batcher and hope it would not detect the invalid transactions within.
	The value $\alpha$ denotes the ratio of valid transactions in the batch.
	A correct secondary batcher rejects the batch if it chooses to verify one of the remaining invalid transactions in the batch. In practice, a batcher samples the transactions to verify without replacement. However, if we model the sampling as one with replacement, we get a simpler model to analyze to find the number of transactions a correct batcher must verify to meet $\alpha$-validity with high probability. 
	Let $K$ denote the number of transactions the correct secondary batcher chooses to verify in the batch.  
	Let $\mathfrak{p}$ be a system parameter that corresponds to the acceptable probability of not rejecting a batch where  $1 -\alpha$ of the transactions are invalid.
	The correct secondary batcher fails to detect an incorrect primary batcher if it samples $K$ transactions among the ones correctly submitted by the clients, as opposed to invalid transactions injected by the primary batcher.   
	This event occurs with probability $\alpha^K$.
	$K$ is accordingly determined by equation $\alpha^k \leq \mathfrak{p}$, i.e.,  $K \geq ln(\mathfrak{p})/ln(\alpha)$. 
	For a one in a billion ($\mathfrak{p}=2^{-30}$) chance of not rejecting a batch where $1-\alpha$ of the transactions are invalid, we get $K \geq \frac{-30}{log_2(\alpha)}$.  For a batch with 50\%, 25\%, 5\% invalid transactions, we get $K=30$, $K\approx72$, $K\approx405$ transactions to verify respectively.
\end{proof}

\section{Efficient Transaction Bundling}
Given that a batcher node can function as either a primary or secondary, it is necessary to prioritize efficiency in transaction bundling. 
The challenge lies in swiftly retrieving transactions from the memory pool while maintaining the order of their arrival and simultaneously allowing uninterrupted insertion of new transactions into the pool.

The Arma memory pool utilizes a mechanism where the retrieval of batches
from the memory pool has a time complexity of O(1). 
This is achieved by having transaction insertions and batch retrievals not be conflicting with each other.
At any given time, in the primary batcher's memory pool, there is a pending batch being filled and a queue of full batches awaiting dispatch. 
The batches are dispatched in the order they appear in the queue and they are deleted afterwards.
If no full batch is present in the queue, the primary retrieves the pending batch that is currently being filled. 
As transactions enter the memory pool, they fill the pending batch, and once that batch reaches a certain size, it is atomically enqueued into the queue of full batches, while a new empty batch is created instead of the now full pending batch.

\vspace{-4mm}

\begin{figure}[h]
	\centering{{\includegraphics[width=10cm]{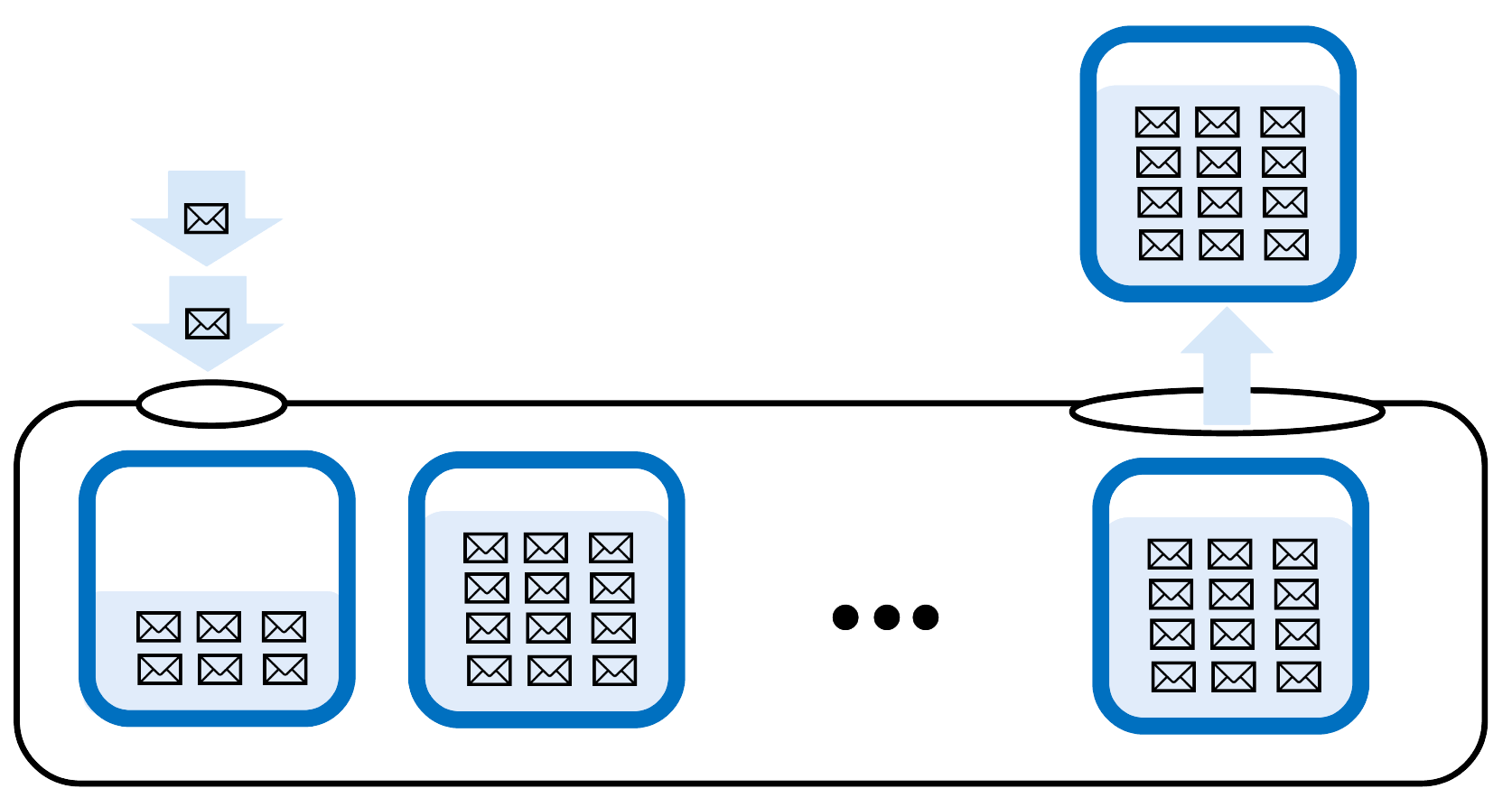} }}%
	\caption{Transaction memory pool for primary batchers}%
	\label{fig:bundling}
\end{figure}

Figure \ref{fig:bundling} depicts how transactions enter the memory pool (left side) and fill batches until they are full to form a queue. 
The next batch to be proposed is the oldest one in the queue (right side).

\section{Tracking Transactions by Secondary Batcher Nodes}
While the primary batcher's role is to quickly bundle a set of transactions,
enabling the secondary batchers to retrieve them efficiently, the secondary batchers are responsible for tracking the transactions in their memory pool and promptly detect if any transactions are not being dispatched within the expected time frame. 

In the secondary batcher's memory pool, incoming transactions are inserted into specific buckets. 
These buckets are assigned timestamps periodically and are
subsequently sealed, preventing further transaction insertions. 
When transactions arrive from the primary batcher node, they are removed
from the corresponding buckets where they were initially inserted.
A sealed bucket that remains non-empty for an excessively long duration
indicates either the transactions within it did not reach the primary
node or that the primary node is censoring the transactions. 

Efficient detection of transaction censorship involves recording the
entry time of transactions into the secondary batcher node's memory
pool and identifying instances where transactions remain in the pool
for an extended period. 
The key concept in achieving efficient censorship detection is that precision is not crucial in the case of actual censorship.
Which is why timestamps can be tracked per bucket (multiple transactions) and not per transaction.

Garbage collection occurs for sealed and emptied buckets, while sealed yet
non-empty buckets serve as indicators of potential censorship or transmission
issues. 
Figure \ref{fig:tracking} depicts how transactions are tracked in a secondary batcher's memory pool.

\begin{figure}[h]
	\centering{{\includegraphics[width=10cm]{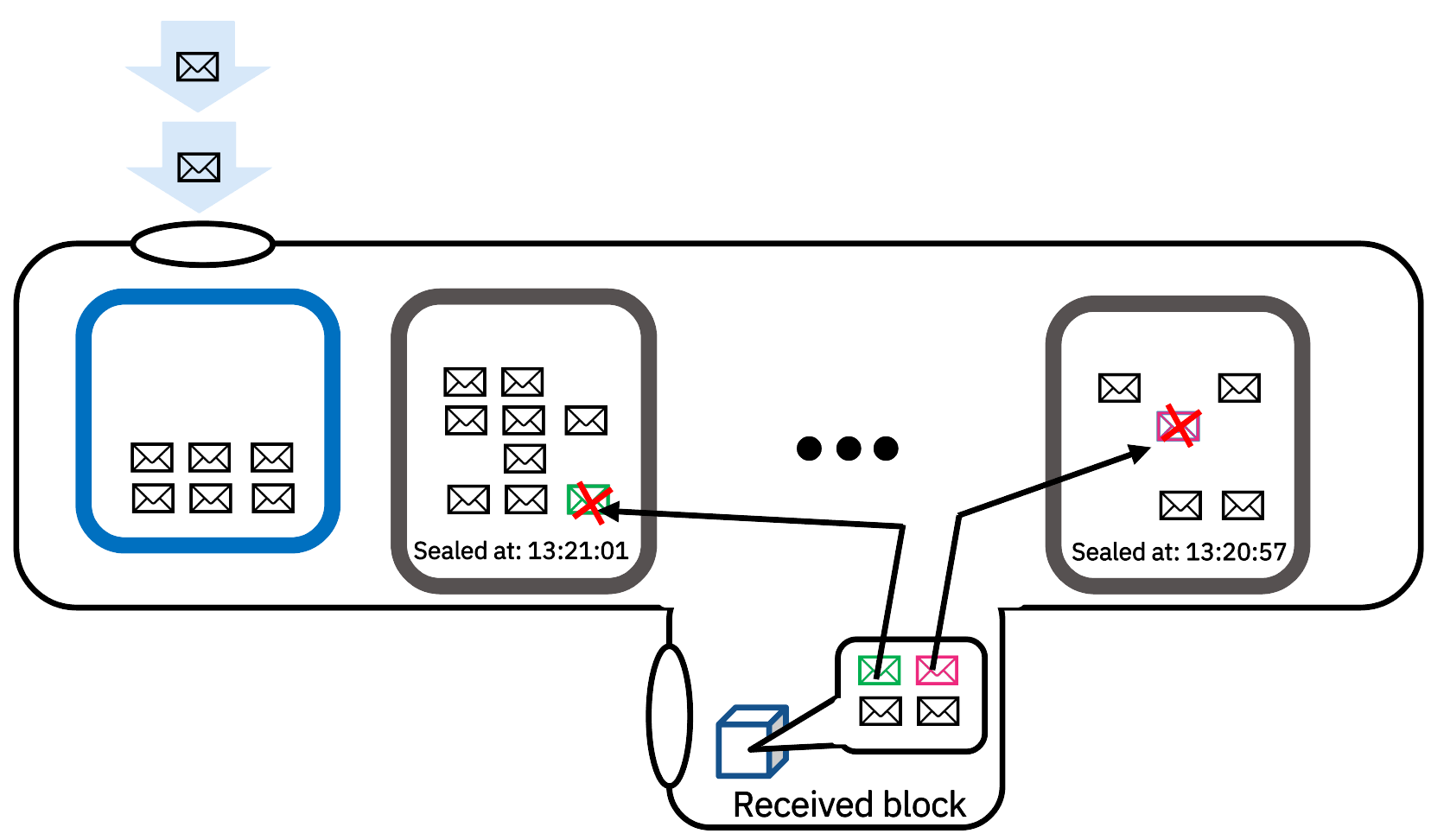} }}%
	\caption{Transaction memory pool for secondary batchers}%
	\label{fig:tracking}
\end{figure}

\section{Garbage Collection and De-Duplication of Batch Attestation Shares}
\label{GC}
To limit the consensus node's pending list from growing indefinitely and to avoid the creation of two block headers corresponding to the same batch, the consensus nodes hold database of digests of batches for which the number of corresponding BASs exceeded the $F+1$ threshold to create block headers. 

In other words, every time $F+1$ BASs are collected, the corresponding digest is added to the database. 
Then, consensus nodes know for which digests they should not attempt to create a second block header.

There are three problems to address with this design: (1) How to garbage collect the database to prevent it from growing indefinitely; (2) How to prevent malicious nodes from re-submitting old BASs after they have been garbage collected from the database, thus forcing creation of prior block headers in the hash chain; and lastly (3) how to uniformly and deterministically purge orphaned BASs from the pending list. 

To that end, Arma divides the time axis into discrete sections of equal length called epochs, which are represented by a monotonically increasing number. 

\begin{figure}[h]
	\centering{{\includegraphics[width=14cm]{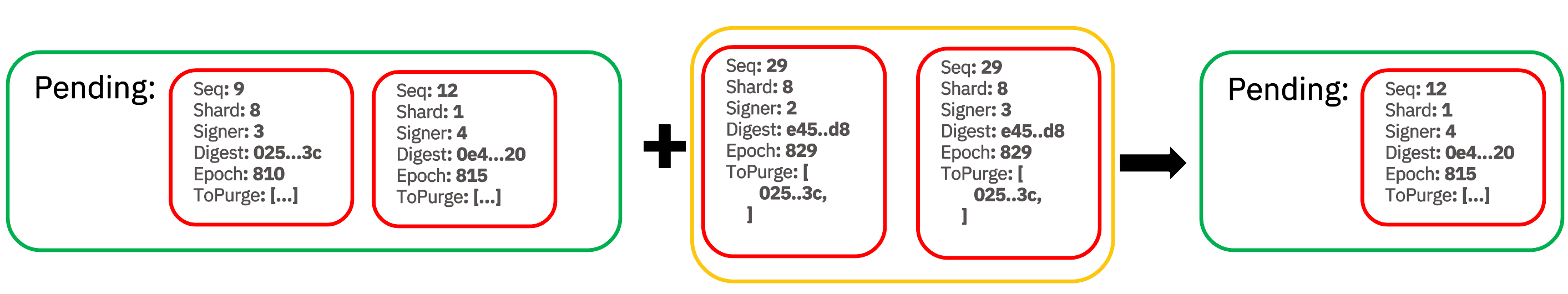} }}%
	\caption{Purging orphaned BAS of transaction batch with digest 025...3c from the pending list}%
	\label{fig:bas2}
\end{figure}

When creating a BAS, a batcher computes the current epoch and includes it in the payload of the BAS sent to the consensus nodes.
Consensus nodes prevent totally ordering BASs with epochs too far in the past, therefore the second problem of re-submission of old BASs is avoided.
Arma addresses the first and third problems without relying on an assumption that the time in all correct consensus nodes is synchronized. Instead, old BASs are pruned via vote counting: Each BAS of a shard contains references to orphaned BASs of the same shard as depicted in Figure \ref{fig:bas2} Once a BAS in the pending list has $F+1$ or more votes that point to it, it is pruned from the pending list. This ensures that at least one correct batcher node considers this BAS too far in the past and also orphaned. A BAS can only point to a BAS of an earlier sequence number or of an earlier term, so cycles are avoided.

\end{document}